\newcommand{\dv}{{\cal D}}
\newcommand{\jdm}{{\cal J}}
\newtheorem{observation}{Observation}
\begin{document}
\title*{Constructing and Sampling Graphs with a Prescribed Joint Degree Distribution}

\author{Isabelle Stanton \and Ali Pinar }
\institute{Isabelle Stanton \at UC Berkeley, Berkeley, CA, \email{isabelle@eecs.berkeley.edu} 
\and Ali Pinar \at Sandia National Laboratories, Livermore, CA, \email{apinar@sandia.gov}\\ This work was funded by the applied mathematics program at the United States Department of Energy and performed at Sandia National Laboratories, a multiprogram laboratory operated by Sandia Corporation, a wholly owned subsidiary of Lockheed Martin Corporation, for the United States Department of Energy¹s National Nuclear Security Administration under contract DE-AC04-94AL85000.}


\maketitle

\abstract{One of the most influential recent results in network analysis is that many natural networks exhibit a power-law or log-normal degree distribution. This has inspired numerous generative models that match this property. However, more recent work has shown that while these generative models do have the right degree distribution, they are not good models for real life networks due to their differences on other important metrics like conductance. We believe this is, in part, because many of these real-world networks have very different \emph{joint degree distributions}, i.e. the probability that a randomly selected edge will be between nodes of degree $k$ and $l$. Assortativity is a sufficient statistic of the joint degree distribution, and it has been previously noted that social networks tend to be assortative, while biological and technological networks tend to be disassortative.\newline\indent
We suggest understanding the relationship between network structure and the joint degree distribution of graphs is an interesting avenue of further research. An important tool for such studies are algorithms that can generate random instances of graphs with the same joint degree distribution. This is the main topic of this paper and we study the problem from both a theoretical and practical perspective. We provide an algorithm for constructing simple graphs from a given joint degree distribution, and a Monte Carlo Markov Chain method for sampling them. We also show that the state space of simple graphs with a fixed degree distribution is connected via \emph{end point switches}. We empirically evaluate the mixing time of this Markov Chain by using experiments based on the autocorrelation of each edge. These experiments show that our Markov Chain mixes quickly on real graphs, allowing for utilization of our techniques in practice.}


\section{Introduction}
Graphs are widely recognized as the standard modeling language for many complex systems, including physical infrastructure (e.g., Internet, electric power, water, and gas networks), scientific processes (e.g., chemical kinetics, protein interactions, and regulatory networks in biology starting at the gene levels through ecological systems), and relational networks (e.g., citation networks, hyperlinks on the web, and social networks). 
The broader adoption of the graph models over the last decade, along with the growing importance of associated applications, calls for descriptive and generative models for real networks.  What is common among these networks?  How do they differ statistically?  Can we quantify the differences among these networks?  Answering these questions requires understanding the topological properties of these graphs, which  have lead to numerous studies on many  ``real-world'' networks from the Internet to social, biological and technological networks~\cite{fff}. 

Perhaps the most prominent theme in these studies is the skewed degree distribution; real-world graphs have a few vertices with very high degree and many vertices with small degree. There is some dispute as to the exact distribution, some have called it power-law~\cite{bar99,fff}, some log-normal~\cite{amaral,pennock02,mitzsurvey,dgx}, and but all agree that it is `heavy-tailed'~\cite{clauset,sala11}. The ubiquity of this distribution has been a motivator for many different generative models and is often used as a metric for the quality of the model. Models like preferential attachment~\cite{bar99}, the copying model~\cite{KumarRRSTU00}, the Barabasi hierarchical model~\cite{hierbarbasi}, forest-fire model, the Kronecker graph model~\cite{kronecker}, geometric preferential attachment~\cite{FlaxmanFV04} and many more~\cite{conf/kdd/LeskovecKF05,jamming,brst01} study the expected degree distribution and use the results to argue for the strength of their method. Many of these models also match other observed features, such as small diameter or densification~\cite{conf/nips/Kleinberg01}. However, recent studies comparing the generative models with real networks on metrics like conductance~\cite{LeskovecLDM08}, core numbers~\cite{SPK11} and clustering coefficients~\cite{KSP11} show that the models do not match other important features of the networks. 

The degree  distribution alone does not define a graph. McKay's estimate~\cite{mckaysestimate} shows that there may be exponentially many graphs with the same degree distribution.  However,  models based on degree distribution are commonly used to compute statistically significant structures in a graph. For example, the modularity metric for community detection in graphs~\cite{modularity,wmodularity} assumes a null hypothesis for the structure of a graph based on its degree distribution, namely that probability of an edge between vertex $v_i$ and $v_j$ is proportional to $d_id_j$, where $d_i$ and $d_j$ represent  the degrees of vertices $v_i$ and $v_j$.  The modularity of a group of vertices is defined by how much their structure deviates from the null hypothesis, and a higher modularity signifies a better community.  The key point here is that the null hypothesis is solely based on its degree distribution and therefore might be incorrect. Degree distribution based models are also used to predict graph properties~\cite{Mihail02,Aiello00,Chung1,Chung2,Chung3}, benchmark~\cite{Lancichinetti2008}, and analyze the expected run time of algorithms~\cite{CindyJon}.

These studies improve our understanding of the relationship between the degree distribution and the structure of a graph. The shortcomings of these studies give insight into what other features besides the degree distribution would give us a better grasp of a graph's structure. For example, the degree assortativity of a network measure whether nodes attach to other similar or dissimilar vertices. This is not specified by the degree distribution, yet studies have shown that social networks tend to be assortative, while biological and technological networks tend to be dissortative~\cite{mejnassortative1,mejnassortative2}. An example of recent work using assortativity is~\cite{KSP11}. In this study, a high assortativity is assumed for connections that generate high clustering coefficients, and this, in addition to preserving the degree distribution, results in very realistic instances of real-world graphs. Another study that has looked at the joint degree distribution is $dK$-graphs~\cite{dkgraphs}. They propose modeling a graph by looking at the distribution of the structure of all sized $k$ subsets of vertices, where $d=1$ are vertex degrees, $d=2$ are edge degrees (the joint degree distribution), $d=3$ is the degree distribution of triangles and wedges, and so on. It is an interesting idea, as clearly the $nK$ distribution contains all information about the graph, but it is far too detailed as a model. At what $d$ value does the additional information become less useful?

One way to enhance  the results based on degree distribution is to use a more restrictive feature such as the \emph{joint degree distribution}. 
Intuitively, if degree distribution of a graph describes the probability that a vertex selected uniformly at random will be of degree $k$ then its joint degree distribution  describes the probability that a randomly selected \emph{edge} will be between nodes of degree $k$ and $l$. We will use a slightly different concept, the joint degree matrix, where the total number of nodes and edges is specified, and the numbers of edges between each set of degrees is counted. Note that while the joint degree distribution uniquely defines the degree distribution of a graph up to isolated nodes, graphs with the same degree distribution may have very different joint degree distributions. We are not proposing that the joint degree distribution be used as a stand alone descriptive model for generating networks. We believe that understanding the relationship between the joint degree distribution and the network structure is important, and that having the capability to generate random instances of graphs with the same joint degree distribution will help enable this goal. Experiments on real data are valuable, but also drawing conclusions only based on a limited data may be misleading, as the graphs may all be biased the same way.  For a more rigorous study, we need a sampling algorithm that can generate random instances in a reasonable time, which is the motivation of this work.

The primary questions investigated by this paper are: Given a joint degree distribution and an integer $n$, does the joint degree distribution correspond to a real labeled graph? If so, can one construct a graph of size $n$ with that joint degree distribution? Is it possible to construct or generate a \emph{uniformly random} graph with that same joint degree distribution? We address these problems from both a theoretical and from an empirical perspective. In particular, being able to uniformly sample graphs allows one to empirically evaluate which other graph features, like diameter, or eigenvalues, are correlated with the joint degree distribution.

\paragraph{Contributions}
We make several contributions to this problem, both theoretically and experimentally. First, we discuss the necessary and sufficient conditions for a given joint degree vector to be graphical. We prove that these conditions are sufficient by providing a new constructive algorithm. Next, we introduce a new configuration model for the joint degree matrix problem which is a natural extension of the configuration model for the degree sequence problem. Finally, using this configuration model, we develop Markov Chains for sampling both pseudographs and simple graphs with a fixed joint degree matrix. A pseudograph allows multiple edges between two nodes and self-loops. We prove the correctness of both chains and mixing time for the pseudograph chain by using previous work. The mixing time of the simple graph chain is experimentally evaluated using autocorrelation. 

In practice, Monte Carlo Markov Chains are a very popular method for sampling from difficult distributions. However, it is often very difficult to theoretically evaluate the mixing time of the chain, and many practitioners simply stop the chain after 5,000, 10,000 or 20,000 iterations without much justification. Our experimental design with autocorrelation provides a set of statistics that can be used as a justification for choosing a stopping point. Further, we show one way that the autocorrelation technique can be adapted from real-valued samples to combinatorial samples.

\section{Related Work}

The related work can be roughly divided into two categories: constructing and sampling graphs with a fixed degree distribution using sequential importance sampling or Monte Carlo Markov Chain methods, and experimental work on heuristics for generating random graphs with a fixed joint degree distribution.

The methods for constructing graphs with a given degree distribution are primarily either reductions to perfect matchings or sequential sampling methods. There are two popular perfect matching methods. The first is the \emph{configuration model}~\cite{bollobasconfig,conf/stoc/AielloCL00}: $k$ mini-vertices are created for each degree $k$ vertex, and all the mini-vertices are connected. Any perfect matching in the configuration graph corresponds to a graph with the correct degree distribution by merging all of the identified mini-vertices. This allows multiple edges and self-loops, which are often undesirable. See Figure~\ref{fig:configmodel}. The second approach, the \emph{gadget configuration model}, prevents multi-edges and self-loops by creating a gadget for each vertex. If $v_i$ has degree $d_i$, then it is replaced with a complete bipartite graph $(U_i,V_i)$ with $|U_i|=n-1-d_i$ and $|V_i|=n-1$. Exactly one node in each $V_i$ is connected to each other $V_j$, representing edge $(i,j)$~\cite{ktv}. Any perfect matching in this model corresponds exactly to a simple graph by using the edges in the matching that correspond with edges connecting any $V_i$ to any $V_j$. We use a natural extension of the first configuration model to the joint degree distribution problem.
\begin{figure}[ht]
\centering
\includegraphics[width=50mm]{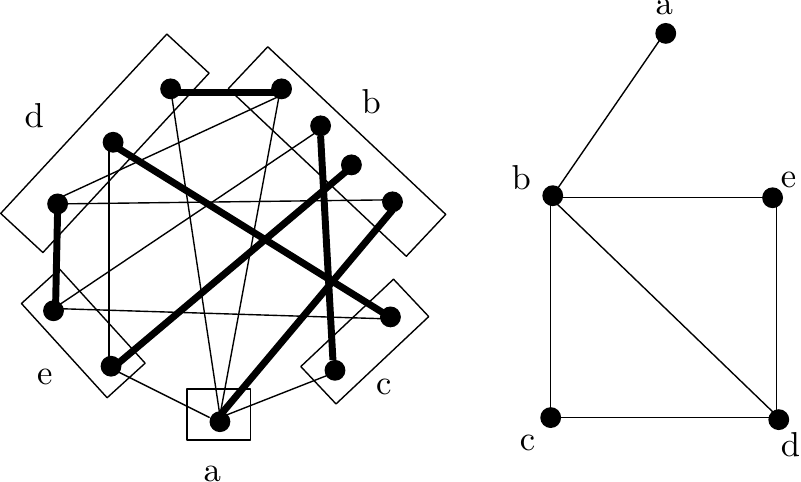}
\caption{On the left, we see an example of the configuration model of the degree distribution of the graph on the right. The edges corresponding to that graph are bold. Each vertex is split into a number of mini-vertices equal to its degree, and then all mini-vertices are connected. Not all edges are shown for clarity.}
\label{fig:configmodel}
\end{figure}

There are also sequential sampling methods that will construct a graph with a given degree distribution. Some of these are based on the necessary and sufficient Erd\H{o}s-Gallai conditions for a degree sequence to be graphical~\cite{blitzsteindiaconis}, while others follow the method of Steger and Wormald~\cite{bayatikimsaberi,stegerwormald,SinclairJ89,JerrumS90,KimV06}. These combine the construction and sampling parts of the problem and can be quite fast. The current best work can sample graphs where $d_{max}=O(m^{1/4-\tau})$ in $O(md_{max})$ time~\cite{bayatikimsaberi}.

Another approach for sampling graphs with a given degree distribution is to use a Monte Carlo Markov Chain method. There is significant work on sampling perfect matchings~\cite{JerrumSV04,Broder86}. There has also been work specifically targeted at the degree distribution problem. Kannan, Tetali and Vempala~\cite{ktv} analyze the mixing time of a Markov Chain that mixes on the configuration model, and another for the gadget configuration model. Gkantsidis, Mihail and Zegura~\cite{GkantsidisMMZ03} use a Markov Chain on the configuration model, but reject any transition that creates a self-loop, multiple edge or disconnects the graph. Both of these chains use the work of Taylor~\cite{taylor} to argue that the state space is connected. 


Amanatidis, Green and Mihail study the problems of when a given joint degree matrix has graphical representation and, further, when it has connected graphical representation~\cite{amanatidis}. They give necessary and sufficient conditions for both of these problems, and constructive algorithms. In Section 2, we give a simpler constructive algorithm for creating a graphical representation that is based on solving the degree sequence problem instead of alternating structures.

Another vein of related work is that of Mahadevan et al. who introduce the concept of $dK$-series~\cite{dkgraphs,orbis}. In this model, $d$ refers to the dimension of the distribution and $2K$ is the joint degree distribution. They propose a heuristic for generating random $2K$-graphs for a fixed $2K$ distribution via edge rewirings. However, their method can get stuck if there exists a degree in the graph for which there is only 1 node with that degree. This is because the state space is not connected. We provide a theoretically sound method of doing this.
 
Finally, Newman also studies the problem of fixing an assortativity value, finding a \emph{joint remaining degree distribution} with that value, and then sampling a random graph with that distribution using Markov Chains~\cite{mejnassortative1,mejnassortative2}. His Markov Chain starts at any graph with the correct degree distribution and converges to a pseudograph with the correct joint remaining degree distribution. By contrast, our work provides a theoretically sound way of constructing a simple graph with a given joint degree distribution first, and our Markov Chain only has simple graphs with the same joint degree distribution as its state space.
\section{Notation and Definitions}

 Formally, a degree distribution of a graph is the probability that a node chosen at random will be of degree $k$. Similarly, the joint degree distribution is the probability that a randomly selected \emph{edge} will have end points of degree $k$ and $l$. In this paper, we are concerned with constructing graphs that exactly match these distributions, so rather than probabilities, we will use a counting definition below and call it the \emph{joint degree matrix}. In particular, we will be concerned with generating \emph{simple} graphs that do not contain multiple edges or self-loops. Any graph that may have multiple edges or self loops will be referred to as a pseudograph.

\begin{definition} The degree vector (DV) $d(G)$ of a graph $G$ is a vector where $d(G)_k$ is the number of nodes of degree $k$ in $G$.
\end{definition}

A generic degree vector will be denoted by $\dv$.

\begin{definition} The joint degree matrix (JDM) $\jdm(G)$ of a graph $G$ is a matrix where $\jdm(G)_{k,l}$ is exactly the number of edges between nodes of degree $k$ and degree $l$ in $G$.
\end{definition}


A generic joint degree matrix will be denoted by $\jdm$. Given a joint degree matrix, $\jdm$, we can recover the number of edges in the graph as $m = \sum_{k=1}^{\infty}\sum_{l=k}^{\infty} \jdm_{k,l}$. We can also recover the degree vector as $\dv_k = \frac 1 k (\jdm_{k,k} + \sum_{l=1}^{\infty} \jdm_{k,l})$. The term $\jdm_{k,k}$ is added twice because $k\dv_k$ is the number of end points of degree $k$ and the edges in $\jdm_{k,k}$ contribute two end points.

The number of nodes, $n$ is then $\sum_{k=1}^{\infty} \dv_k$. This count does not include any degree 0 vertices, as these have no edges in the joint degree matrix. Given $n$ and $m$, we can easily get the degree distribution and joint degree distribution. They are $P(k)=\frac 1 n \dv_k$ while $P(k,l) = \frac 1 m \jdm_{k,l}$. Note that $P(k)$ is not quite the marginal of $P(k,l)$ although it is closely related. 

\paragraph{The Joint Degree Matrix Configuration Model} We propose a new configuration model for the joint degree distribution problem. Given $\jdm$ and its corresponding $\dv$ we create $k$ labeled mini-vertices for every vertex of degree $k$. In addition, for every edge with end points of degree $k$ and $l$ we create two labeled mini-end points, one of class $k$ and one of class $l$. We connect all degree $k$ mini-vertices to the class $k$ mini-end points. This forms a complete bipartite graph for each degree, and each of these forms a connected component that is disconnected from all other components. We will call each of these components the ``$k$-neighborhood''.  Notice that there are $k\dv_k$ mini-vertices of degree $k$, and $k\dv_k=\jdm_{k,k}+\sum_l \jdm_{k,l}$ corresponding mini-end points in each $k$-neighborhood. This is pictured in Figure~\ref{fig:jdmconfig}. Take any perfect matching in this graph. If we merge each pair of mini-end points that correspond to the same edge, we will have some pseudograph that has exactly the desired joint degree matrix. This observation forms the basis of our sampling method.

\begin{figure}[ht]
\centering
\includegraphics[width=80mm]{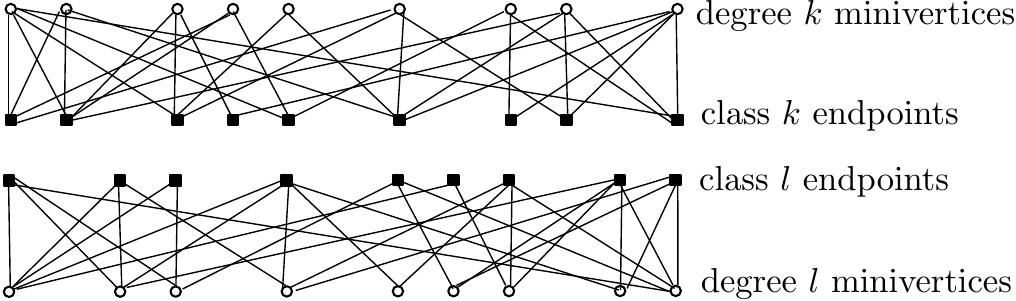}
\caption{The joint degree matrix configuration model. This shows just two degree neighborhoods of the joint degree matrix configuration model. Each vertex of degree $k$ is split into $k$ mini-vertices which are represented by the circles. These then form a complete bipartite component when they are connected with the class $k$ end points, the squares. Each degree neighborhood is completely disconnected from all others. Not all edges are included for clarity.}
\label{fig:jdmconfig}
\end{figure}

\section{Constructing Graphs with a Given Joint Degree Matrix}\label{sec:construct}

The Erd\H{o}s-Gallai condition is a necessary and sufficient condition for a degree sequence to be realizable as a simple graph. 
\begin{theorem} {\bf Erd\H{o}s-Gallai} A degree sequence $\overline{d}=\{d_1,d_2,\cdots d_n\}$ sorted in non-increasing order is graphical if and only if for every $k\leq n$, $\sum_{i=1}^k d_i \leq k(k-1) + \sum_{i=k+1}^n \min(d_i,k)$.
\end{theorem}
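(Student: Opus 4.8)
The plan is to prove the two implications separately: necessity is a short double-counting argument, while sufficiency is the real content and I would establish it by an inductive construction.

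For necessity, I would assume a simple graph $G$ realizes $\overline{d}$, fix any $k\le n$, and focus on the $k$ highest-degree vertices $v_1,\dots,v_k$. The quantity $\sum_{i=1}^k d_i$ counts each edge with both endpoints inside this set twice and each edge with exactly one endpoint inside it once. Since there are at most $\binom{k}{2}$ internal edges, they contribute at most $k(k-1)$. Each vertex $v_i$ with $i>k$ can be joined to at most $\min(d_i,k)$ of the top $k$ vertices — it has only $d_i$ neighbors in total and there are only $k$ candidates — so summing over $i>k$ bounds the crossing edges by $\sum_{i=k+1}^n \min(d_i,k)$, which is exactly the right-hand side. (The handshake lemma additionally forces $\sum_i d_i$ to be even; this parity is tacitly required for the converse to hold and I would carry it as a standing hypothesis.)

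For sufficiency I would induct on the degree sum $\sum_i d_i$, with the all-zero sequence realized by the empty graph as the base case. Given a non-increasing sequence satisfying every Erd\H{o}s--Gallai inequality, the strategy is to pass to a sequence $\overline{d}'$ of strictly smaller sum that still satisfies all the inequalities, apply the induction hypothesis to obtain a realizing graph $G'$, and then recover a realization of $\overline{d}$ from $G'$ by inserting one edge. Concretely I would use a greedy Havel--Hakimi-style reduction: attach the chosen vertex to the currently highest-degree remaining vertices, delete it, and decrement their recorded degrees, re-sorting afterward. The induction then only needs the reduced sequence to be graphical, which the hypothesis supplies.

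The main obstacle lives entirely in this sufficiency direction and has two parts. First, I must verify that the reduced, re-sorted sequence still satisfies \emph{all} $n$ Erd\H{o}s--Gallai inequalities; because the reduction lowers several entries and re-sorting changes which vertices sit ``on top,'' this is a delicate case analysis over the ranges of $k$ lying below, at, and above the modified indices, and it is genuinely the combinatorial heart of the theorem. Second, to keep the graph \emph{simple} I must ensure that the reattachment does not duplicate an edge, which requires an edge-swapping (2-switch) lemma guaranteeing a realization in which the target vertex is adjacent precisely to the intended highest-degree vertices. Discharging these two points — rather than the easy counting in the forward direction — is where the work concentrates, and it is exactly the part the paper sidesteps by giving its own direct constructive algorithm.
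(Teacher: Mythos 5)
Your proposal follows essentially the same route as the paper: the necessity direction is exactly the paper's counting argument (at most $\binom{k}{2}$ edges internal to the top $k$ vertices and at most $\min(d_i,k)$ edges from each outside vertex into that set), and for sufficiency the paper, like you, defers to the constructive Havel--Hakimi reduction rather than proving it in place. One small remark: in the induction as you set it up, reattaching the deleted vertex to a realization of the reduced sequence cannot create a multi-edge (the vertex is new), so the 2-switch lemma you worry about is only needed for the converse direction of Havel--Hakimi, not for the sufficiency of Erd\H{o}s--Gallai.
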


The necessity of this condition comes from noting that in a set of vertices of size $k$, there can be at most $k \choose 2$ internal edges, and for each vertex $v$ not in the subset, there can be at most $\min\{d(v),k\}$ edges entering. The condition considers each subset of decreasing degree vertices and looks at the degree requirements of those nodes. If the requirement is more than the available edges, the sequence cannot be graphical. The sufficiency is shown via the constructive Havel-Hakimi algorithm~\cite{havel,hakimi}.

The existence of the Erd\H{o}s-Gallai condition inspires us to ask whether similar necessary and sufficient conditions exist for a joint degree matrix to be graphical. The following necessary and sufficient conditions were independently studied by Amanatidis et al.~\cite{amanatidis}. 
\begin{theorem}
Let $\jdm$ be given and $\dv$ be the associated degree distribution.
$\jdm$ can be realized as a simple graph if and only if (1) $\dv_k$ is integer-valued for all $k$ and (2) $\forall k,l,$ if $k\neq l$ then $\jdm_{k,l}\leq \dv_k\dv_l$. For each $k$, $\jdm_{k,k}\leq {\dv_k\choose 2}$.
\end{theorem}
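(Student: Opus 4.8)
The plan is to prove the two directions separately, dispatching necessity first as a direct counting argument. Condition (1) is forced because in any realizing graph $\dv_k$ is literally the number of degree-$k$ vertices, hence a nonnegative integer, so the recovery identity $\dv_k=\frac1k(\jdm_{k,k}+\sum_l\jdm_{k,l})$ must evaluate to an integer. For condition (2), fix a realizing simple graph $G$ and observe that the $\jdm_{k,l}$ edges joining the $\dv_k$ degree-$k$ vertices to the $\dv_l$ degree-$l$ vertices (with $k\neq l$) form a simple bipartite graph, which admits at most $\dv_k\dv_l$ edges; the $\jdm_{k,k}$ edges internal to the degree-$k$ class form a simple graph on $\dv_k$ vertices and so number at most $\binom{\dv_k}{2}$.

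For sufficiency I would give a constructive algorithm that, as promised in the introduction, reduces to a family of degree-sequence problems rather than manipulating alternating structures. Label the $\dv_k$ vertices in each class $k$. The construction first decides, for every vertex $v$ of class $k$ and every class $l$, how many of $v$'s $k$ edges should land in class $l$, and then realizes each ``block'' independently: the off-diagonal block between classes $k$ and $l$ is a bipartite graph, while the diagonal block of within-class edges is an ordinary simple graph. I would fix the block degrees by spreading the $\jdm_{k,l}$ edges as evenly as possible over the $\dv_k$ vertices (and likewise the $2\jdm_{k,k}$ internal endpoints over the $\dv_k$ vertices), so that within each block every prescribed degree differs from the block average by less than one.

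Two things then remain to be checked. First, the block degrees assigned to a single vertex of class $k$ must sum to exactly $k$. Averaging is automatic from the recovery identity, which rearranges to $\frac{1}{\dv_k}\bigl(2\jdm_{k,k}+\sum_{l\neq k}\jdm_{k,l}\bigr)=k$; the real content is that the integer rounding can be coordinated so that every vertex ends with the same total. I would model this coordination as its own balanced bipartite degree-sequence problem---blocks on one side with surpluses equal to their ``ceiling'' counts, vertices on the other side each demanding the common surplus $e=k-(\text{sum of block floors})$---and argue feasibility from the fact that these surpluses sum to $\dv_k\,e$, again by the identity. I expect this coordination to be the main obstacle, because it is exactly where the global constraint ``every vertex has degree $k$'' has to be reconciled with the independently chosen blocks.

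Second, each block's prescribed degree sequence must itself be realizable, and this is where hypothesis (2) is used. A diagonal block is a near-regular sequence on $\dv_k$ vertices with even sum $2\jdm_{k,k}$ and maximum degree at most $\dv_k-1$ (using $\jdm_{k,k}\leq\binom{\dv_k}{2}$), which I would realize by Havel--Hakimi after checking the Erd\H{o}s--Gallai inequalities; an off-diagonal block is a pair of near-regular sequences summing to $\jdm_{k,l}\leq\dv_k\dv_l$, which I would realize as a bipartite graph after verifying the Gale--Ryser inequalities. The key supporting lemma is that near-regular sequences respecting the capacity bound automatically satisfy these realizability conditions. Assembling the realized blocks over all pairs of classes then produces a simple graph whose joint degree matrix is exactly $\jdm$, completing the argument.
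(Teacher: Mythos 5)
Your proposal is correct, and while it shares the paper's central device---spreading each block $\jdm_{k,l}$ as a near-regular (bipartite) degree sequence over the $\dv_k$ and $\dv_l$ vertices---it organizes the argument quite differently. The paper's Algorithm~1 is sequential: it realizes one block at a time, tracks residual degrees, and assigns each block's higher degrees to the vertices with the largest remaining residual (a round-robin queue). Its correctness rests on two lemmas: the \emph{balanced degree invariant} (residual degrees within a degree class never differ by more than one) and an inductive feasibility argument showing $\hat{\jdm}_{k,l}\leq\hat{\dv}_k\hat{\dv}_l$ is preserved after every iteration, so the greedy process never gets stuck. You instead make the assignment global and two-phase: first solve, for each class $k$, an auxiliary bipartite degree-sequence problem deciding which blocks hand which vertices their ``extra'' edge, then realize all blocks independently. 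This isolates the only real coordination issue into a clean subproblem with a Gale--Ryser certificate, and it makes simplicity of the union immediate (distinct blocks contribute edges between disjoint class pairs). In effect, the paper's queue is one particular solution to your coordination problem, and its inductive invariant is replaced in your version by a one-shot feasibility check. One small point to tighten: you justify the coordination step only by the sum identity $\sum_l a_{k,l}=e_k\dv_k$, and equality of sums alone does not guarantee bipartite realizability in general. It does here, but you should say why: each block surplus satisfies $a_{k,l}<\dv_k$ and the vertex demands are uniform, so the Gale--Ryser inequalities $\sum_{i=1}^{s}a_i\leq\dv_k\min(s,e_k)$ follow by splitting on $s\leq e_k$ versus $s>e_k$ (or, more concretely, a round-robin distribution of the extras realizes the sequence directly). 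With that sentence added, your route is a complete and arguably more modular proof of sufficiency; the paper's route has the advantage of reading off an explicit $O(m)$-time construction in a single pass.
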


The necessity of these conditions is clear. The first condition requires that there are an integer number of nodes of each degree value. The next two are that the number of edges between nodes of degree $k$ and $l$ (or $k$ and $k$) are not more than the total possible number of $k$ to $l$ edges in a simple graph defined by the marginal degree sequences. Amanatidis et al. show the sufficiency through a constructive algorithm. We will now introduce a new algorithm that runs in $O(m)$ time.

The algorithm proceeds by building a nearly regular graph for each class of edges, $\jdm_{k,l}$. Assume that $k\neq l$ for simplicity. Each of the $\dv_k$ nodes of degree $k$ receives $\lfloor \jdm_{k,l}/ \dv_k\rfloor$ edges, while $\jdm_{k,l}\mod \dv_k$ each have an extra edge. Similarly, the $l$ degree nodes have $\lfloor \jdm_{k,l}/\dv_l\rfloor$ edges, with $\jdm_{k,l}\mod \dv_l$ having 1 extra. We can then construct a simple bipartite graph with this degree sequence. This can be done in linear time in the number of edges using queues as is discussed after Lemma~\ref{obs:nearlyequal}. If $k = l$, the only differences are that the graph is no longer bipartite and there are $2\jdm_{k,k}$ end points to be distributed among $\dv_k$ nodes. To find a simple nearly regular graph, one can use the Havel-Hakimi~\cite{hakimi,havel} algorithm in $O(\jdm_{k,k})$ time by using the degree sequence of the graph as input to the algorithm.

We must show that there is a way to combine all of these nearly-regular graphs together without violating any degree constraints. Let $d=\langle d_1,d_2,\cdots d_n\rangle$ be the sorted non-increasing order degree sequence from $\dv$. Let $\hat{d}_v$ denote the residual degree sequence where the residual degree of a vertex $v$ is $d_v$ minus the number of edges that currently neighbor $v$. Also, let $\hat{\dv}_k$ denote the number of nodes of degree $k$ that have non-zero residual degree, i.e. $\hat{\dv}_k=\sum_{d_j=k}{\mathbf 1}(\hat{d_j}\neq 0)$.

\begin{algorithm}

\begin{algorithmic}[1]
\caption{Greedy Graph Construction with a Fixed JDM, Input: $\jdm$, $n$, $m$, $\dv$}
\FOR{$k=n\cdots 1$ {\bf and}  $l=k\cdots 1$}
			\IF {$k\neq l$}
				\STATE Let $a = \jdm_{k,l}\mod \dv_k$ and $b=\jdm_{k,l}\mod \dv_l$
				\STATE Let $x_1\cdots x_{a} = \lfloor \frac{\jdm_{k,l}}{\dv_k}\rfloor+1$, $x_{a+1}\cdots x_{\dv_k} 	=\lfloor \frac{\jdm_{k,l}}{\dv_k}\rfloor$ and $y_1\cdots y_{b} = \lfloor \frac{\jdm_{k,l}}{\dv_l}\rfloor+1$, $y_{b+1}\cdots y_{\dv_l} =\lfloor \frac{\jdm_{k,l}}{\dv_l}\rfloor$
				\STATE Construct a simple bipartite graph $B$ with degree sequence $x_1\cdots x_{\dv_k},y_1\cdots y_{\dv_l}$
			\ELSE
				\STATE Let $c = 2\jdm_{k,k}\mod \dv_k$
				\STATE Let $x_1\cdots x_{c} = \lfloor \frac{2\jdm_{k,k}}{\dv_k}\rfloor+1$ and $x_{c+1}\cdots x_{\dv_k} = \lfloor \frac{2\jdm_{k,k}}{\dv_k}\rfloor$
				\STATE Construct a simple graph $B$ with the degree sequence $x_1\cdots x_{\dv_k}$
			\ENDIF
			\STATE Place $B$ into $G$ by matching the nodes of degree $k$ with higher residual degree with $x_1\cdots x_{a}$ and those of degree $l$ with higher residual degree with $y_1\cdots y_{b}$. The other vertices in $B$ can be matched in any way with those in $G$ of degree $k$ and $l$ 
			\STATE Update the residual degrees of each $k$ and $l$ degree node.
\ENDFOR
\end{algorithmic}\label{alg:greedy}
\end{algorithm}

To combine the nearly uniform subgraphs, we start with the largest degree nodes, and the corresponding largest degree classes. It is not necessary to start with the largest, but it simplifies the proof. First, we note that after every iteration, the joint degree sequence is still feasible if $\forall k,l, k\neq l$ $\hat{\jdm}_{k,l}\leq \hat{\dv}_k\hat{\dv}_l$ and $\forall k$ $\hat{\jdm}_{k,k}\leq {\hat{\dv}_k \choose 2}$.

We will prove that Algorithm~\ref{alg:greedy} can always satisfy the feasibility conditions. First, we note a fact.

\begin{observation} For all $k$, $\sum_{l} \hat{\jdm}_{k,l} +\hat{\jdm}_{k,k} = \sum_{d_j=k}\hat{d_j}$ \label{sums}\end{observation}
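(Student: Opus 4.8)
The plan is to recognize the claimed identity as the residual analogue of the global relation $k\dv_k = \jdm_{k,k} + \sum_l \jdm_{k,l}$ derived in Section~3, and to establish it by a direct double-counting argument valid at every iteration of Algorithm~\ref{alg:greedy}. The central observation is that at any point in the execution, $\hat{d}_j$ equals the number of not-yet-placed edges incident to vertex $j$, and $\hat{\jdm}_{k,l}$ equals the number of not-yet-placed edges whose endpoints have degrees $k$ and $l$. Both facts are immediate from the definitions of the residual degree and the residual joint degree matrix, since placing an edge is exactly what decrements these quantities.

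Granting this, I would fix a degree value $k$ and count the set of incidences $(e,v)$, where $e$ is an unplaced edge and $v$ is a degree-$k$ endpoint of $e$, in two ways. Counting by vertex: each degree-$k$ vertex contributes exactly $\hat{d}_j$ such incidences, namely its unplaced incident edges, so the total is $\sum_{d_j=k}\hat{d_j}$, the right-hand side. Counting by edge class: an unplaced edge between a degree-$k$ and a degree-$l$ vertex with $l\neq k$ supplies one degree-$k$ endpoint, while an unplaced edge inside the degree-$k$ class supplies two; summing over classes gives $\sum_{l\neq k}\hat{\jdm}_{k,l} + 2\hat{\jdm}_{k,k} = \sum_l \hat{\jdm}_{k,l} + \hat{\jdm}_{k,k}$, the left-hand side. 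Equating the two counts yields the claim.

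As an alternative, the identity follows by induction on the number of placed edges: at initialization $\hat{d}_j = d_j$ and $\hat{\jdm} = \jdm$, so it reduces to the Section~3 relation, and each time an edge is placed and the residual degrees are updated, both sides drop by the same amount—by one on each side for each of $k$ and $l$ when $k\neq l$, and by two on each side when $k = l$.

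I expect the only delicate point to be the bookkeeping for the diagonal term: an edge between two degree-$k$ vertices must be counted twice among the degree-$k$ endpoints, which is precisely why $\hat{\jdm}_{k,k}$ appears once inside the sum $\sum_l \hat{\jdm}_{k,l}$ and once more as the separate additive term. Keeping this factor of two consistent between the two counts is the crux of the argument; everything else is routine.
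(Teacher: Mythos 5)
Your double-counting argument is correct and is essentially the paper's own (one-sentence) justification: the left-hand side counts the degree-$k$ endpoints demanded by the residual matrix $\hat{\jdm}$, with $\hat{\jdm}_{k,k}$ appearing twice because a $(k,k)$ edge supplies two such endpoints, while the right-hand side counts the residual endpoints available at degree-$k$ vertices. Your treatment of the diagonal term matches the paper's convention for recovering $\dv_k$ from $\jdm$, so no further changes are needed.
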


This follows directly from the fact that the left hand side is summing over all of the $k$ end points needed by $\hat{\jdm}$ while the right hand side is summing up the available residual end points from the degree distribution. Next, we note that if all residual degrees for degree $k$ nodes are either 0 or 1, then:

\begin{observation} If, for all $j$ such that $d_j=k$, $\hat{d_j}=0$ or 1 then\\ $\sum_{d_j=k}\hat{d_j} = \sum_{d_j=k}{\mathbf 1}(\hat{d_j} \neq 0)=\hat{\dv}_k$.\label{indicator}\end{observation}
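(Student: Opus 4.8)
The plan is to reduce the claimed pair of equalities to a trivial termwise identity and then sum. The key point is that for any integer $x$ with $x\in\{0,1\}$ we have $x = {\mathbf 1}(x\neq 0)$: when $x=0$ both sides equal $0$, and when $x=1$ both sides equal $1$. This termwise collapse is the only substantive content of the statement.

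First I would invoke the hypothesis that for every vertex $j$ with $d_j=k$ the residual degree $\hat{d_j}$ is either $0$ or $1$. Applying the identity above with $x=\hat{d_j}$ gives $\hat{d_j} = {\mathbf 1}(\hat{d_j}\neq 0)$ for each such $j$. Summing over all $j$ with $d_j=k$ then yields the first claimed equality,
\[
\sum_{d_j=k}\hat{d_j} \;=\; \sum_{d_j=k}{\mathbf 1}(\hat{d_j}\neq 0).
\]
The second equality is immediate from the definition of $\hat{\dv}_k$ stated just before Algorithm~\ref{alg:greedy}, namely $\hat{\dv}_k=\sum_{d_j=k}{\mathbf 1}(\hat{d_j}\neq 0)$, the number of degree-$k$ nodes with nonzero residual degree. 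Chaining the two equalities finishes the argument.

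There is no genuine obstacle here; the statement is pure bookkeeping. The only thing to be careful about is recognizing that the hypothesis $\hat{d_j}\in\{0,1\}$ is exactly what licenses collapsing the residual-degree sum into a count of nonzero entries: a single vertex with $\hat{d_j}=2$ would already break the first equality, so the assumption cannot be dropped. I expect this fact is isolated as a separate observation precisely because it is the clean endpoint that the greedy construction drives the residual degrees toward, and it will be paired with Observation~\ref{sums} to verify the feasibility conditions for Algorithm~\ref{alg:greedy}.
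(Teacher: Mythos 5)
Your proof is correct and matches the paper's intent exactly: the paper states this observation without proof, treating it as the immediate termwise identity $\hat{d_j}={\mathbf 1}(\hat{d_j}\neq 0)$ for $\hat{d_j}\in\{0,1\}$ combined with the definition of $\hat{\dv}_k$, which is precisely your argument. Nothing further is needed.
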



\begin{lemma} After every iteration, for every pair of vertices $u,v$ of any degree $k$,\\ $|\hat{d}_u-\hat{d}_v|\leq 1$.\label{obs:nearlyequal}
\end{lemma}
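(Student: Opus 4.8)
The plan is to prove the statement by induction on the iterations of Algorithm~\ref{alg:greedy}, showing that the property ``the residual degrees within a single degree class differ by at most one'' is an invariant maintained by each placement. Before any subgraph is placed every degree-$k$ vertex has residual degree exactly $k$, so the claim holds trivially at the start. For the inductive step I assume that, just before the current iteration, the residual degrees of the degree-$k$ vertices take at most two consecutive values, $r$ and $r+1$, and I write $p$ for the number of them whose residual degree equals $r+1$.

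Now consider what the current iteration (processing the class $\jdm_{k,l}$ with $k\neq l$, say) does to the degree-$k$ vertices. By construction the subgraph $B$ assigns each degree-$k$ vertex either $q:=\lfloor \jdm_{k,l}/\dv_k\rfloor$ or $q+1$ edges, with exactly $a=\jdm_{k,l}\bmod\dv_k$ of them receiving $q+1$. Placing $B$ therefore decreases the residual degree of such a vertex by either $q$ or $q+1$. The decisive feature of the placement step is that the $a$ vertices receiving the extra edge (the $q+1$ slots) are chosen to be the $a$ degree-$k$ vertices of \emph{highest} residual degree. I would then split into two cases. If $a\ge p$, every residual-$(r+1)$ vertex is decremented by $q+1$ (ending at $r-q$), while the residual-$r$ vertices are decremented either by $q+1$ (ending at $r-q-1$) or by $q$ (ending at $r-q$), so the new residuals lie in $\{r-q-1,\,r-q\}$. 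If $a<p$, only $a$ of the residual-$(r+1)$ vertices are decremented by $q+1$ (ending at $r-q$), and the remaining residual-$(r+1)$ vertices together with all residual-$r$ vertices are decremented by $q$ (ending at $r-q+1$ and $r-q$ respectively), so the new residuals lie in $\{r-q,\,r-q+1\}$. In either case the degree-$k$ residuals again occupy two consecutive integers, so the invariant survives for that class.

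The same computation applies verbatim to the degree-$l$ vertices, using $\lfloor \jdm_{k,l}/\dv_l\rfloor$ and $b$ in place of $q$ and $a$, and the diagonal case $k=l$ is handled identically once one replaces $\jdm_{k,l}$ by $2\jdm_{k,k}$ and distributes the $2\jdm_{k,k}$ endpoints over the $\dv_k$ vertices of $B$. Every degree class other than $k$ and $l$ is untouched by the iteration and therefore keeps its property by the inductive hypothesis. Hence the invariant holds after the iteration, completing the induction.

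The point I expect to require the most care is verifying that the greedy pairing in the placement step genuinely forbids the only dangerous configuration, namely a gap of $2$ between the values $r-q+1$ and $r-q-1$. The value $r-q+1$ can arise only from a residual-$(r+1)$ vertex that receives just $q$ edges, and $r-q-1$ only from a residual-$r$ vertex that receives $q+1$ edges; but if some residual-$r$ vertex received the extra edge then, because the extra edges are handed out in decreasing order of residual degree, \emph{all} residual-$(r+1)$ vertices must already have received it, so none of them is left with only $q$. The two extreme residual values thus cannot coexist, which is exactly the dichotomy encoded by the case analysis above.
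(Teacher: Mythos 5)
Your proof is correct. It differs in formalization from the paper's argument, which proves the lemma via a round-robin metaphor: the degree-$k$ vertices are kept in a queue that persists across iterations, each edge assignment pops a vertex and reinserts it at the back, and no vertex can be popped twice before every other vertex has been popped once, so cumulative assignment counts (hence residuals) differ by at most one. You instead run a direct induction on iterations, maintaining the invariant that the degree-$k$ residuals occupy at most two consecutive values, and your case split on $a\ge p$ versus $a<p$ is exactly the right dichotomy: it rules out the only dangerous outcome, namely that some residual-$(r+1)$ vertex keeps $r-q+1$ while some residual-$r$ vertex drops to $r-q-1$, which cannot happen because the $a$ extra edges are handed to the highest-residual vertices first. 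The two arguments encode the same fact --- the extra edges always go to the vertices that have received fewer edges so far --- but yours is tied directly to the placement rule as written in Algorithm~\ref{alg:greedy} (matching $x_1\cdots x_a$ to the higher-residual vertices) and is the more explicit and self-contained of the two, whereas the paper's queue picture is more compact but leaves the correspondence between queue position and residual degree implicit. Both are valid proofs of the lemma.
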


Amanatidis et al. refer to Lemma~\ref{obs:nearlyequal} as the \emph{balanced degree invariant}. This is most easily proven by considering the vertices of degree $k$ as a queue. If there are $x$ edges to be assigned, we can consider the process of deciding how many edges to assign each vertex as being one of popping vertices from the top of the queue and reinserting them at the end $x$ times. Each vertex is assigned edges equal to the number of times it was popped. The next time we assign edges with end points of degree $k$, we start with the queue at the same position as where we ended previously. It is clear that no vertex can be popped twice without all other vertices being popped at least once.

\begin{lemma} 
The above algorithm can always greedily produce a graph that satisfies $\jdm$, provided $\jdm$ satisfies the initial necessary conditions.\label{lem:algworks}
\end{lemma}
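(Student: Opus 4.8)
The plan is to prove two invariants simultaneously by induction on the iterations of the double loop: first, the balanced degree invariant of Lemma~\ref{obs:nearlyequal}, and second, a \emph{residual feasibility} condition stating that every class not yet processed still satisfies $\hat{\jdm}_{k,l}\leq\hat{\dv}_k\hat{\dv}_l$ for $k\neq l$ and $\hat{\jdm}_{k,k}\leq\binom{\hat{\dv}_k}{2}$. The base case is immediate: before any edges are placed we have $\hat{\jdm}=\jdm$ and $\hat{\dv}=\dv$, so the residual feasibility condition is exactly the necessary conditions assumed in the statement, and the balanced invariant holds trivially since every residual degree equals its target. Given these invariants I then need to verify two things per iteration: that the nearly-regular graph $B$ can be built and embedded into $G$ without over-filling any vertex or creating a multi-edge, and that feasibility survives into the next iteration.

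For the embedding, note first that each unordered degree pair $(k,l)$ is visited exactly once and that $B$ itself is simple, so every edge between a fixed degree-$k$ vertex and a fixed degree-$l$ vertex is created in a single iteration; hence no multi-edge can ever appear. The degree bound is where Lemma~\ref{obs:nearlyequal} does the work: the residual degrees of the degree-$k$ vertices differ by at most one, and the construction assigns each such vertex either $\lfloor\jdm_{k,l}/\dv_k\rfloor$ or one more edge, routing the extra edges to the vertices of largest residual degree. Matching ``largest residual to largest load'' then guarantees that no vertex is asked to accept more edges than its residual degree permits; in particular, once some degree-$k$ vertices are saturated, the per-vertex load is $0$ or $1$ and the $1$'s land precisely on the unsaturated vertices.

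The core of the argument, and the step I expect to be the real obstacle, is preserving feasibility. Processing class $(k,l)$ changes only $\hat{\dv}_k$ and $\hat{\dv}_l$, so only classes touching degree $k$ or $l$ can possibly become infeasible. I would argue that $\hat{\dv}_k$ drops below its previous value only when some degree-$k$ vertex reaches residual $0$, and that by the balanced invariant this forces \emph{every} degree-$k$ residual into $\{0,1\}$. In that regime Observations~\ref{sums} and~\ref{indicator} combine to show that the total remaining degree-$k$ demand $\sum_{l'}\hat{\jdm}_{k,l'}+\hat{\jdm}_{k,k}$ equals exactly $\hat{\dv}_k$; consequently each surviving off-diagonal class obeys $\hat{\jdm}_{k,l'}\leq\hat{\dv}_k\leq\hat{\dv}_k\hat{\dv}_{l'}$, using that $\hat{\dv}_{l'}\geq1$ whenever there is demand between degrees $k$ and $l'$. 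The place requiring the most care is the diagonal bound, where the endpoint count carries a factor of two: the same identity gives $2\hat{\jdm}_{k,k}\leq\hat{\dv}_k$, and combined with $\hat{\dv}_k\geq2$ (needed for any diagonal edge in a simple graph) this yields $\hat{\jdm}_{k,k}\leq\binom{\hat{\dv}_k}{2}$. When $\hat{\dv}_k$ does not decrease, feasibility is simply inherited, since neither the relevant demands nor their bounds change.

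Once the induction is in place the lemma follows: the loop terminates with every $\hat{\jdm}_{k,l}=0$ and every vertex at its target degree, so $G$ is the edge-disjoint union of the simple graphs $B$ and realizes $\jdm$. Throughout, the most delicate bookkeeping is keeping the endpoint counts that drive Observation~\ref{sums} separate from the edge counts constrained by the feasibility inequalities, particularly on the diagonal; I expect this to be the main source of technical friction rather than any conceptual difficulty.
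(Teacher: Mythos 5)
Your proof is correct and follows essentially the same route as the paper's: the balanced degree invariant of Lemma~\ref{obs:nearlyequal} plus Observations~\ref{sums} and~\ref{indicator} to show that once some degree-$k$ vertex saturates, all degree-$k$ residuals lie in $\{0,1\}$ and the remaining demands are bounded by $\hat{\dv}_k$. Your treatment of the diagonal case is in fact slightly more careful than the paper's, which elides the factor of two in $2\hat{\jdm}_{k,k}\leq\hat{\dv}_k$ and writes the not-quite-literal chain $\hat{\jdm}_{k,k}\leq\hat{\dv}_k\leq\binom{\hat{\dv}_k}{2}$.
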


\begin{proof} 
There is one key observation about this algorithm - it maximizes $\hat{\dv}_k\hat{\dv}_l$ by ensuring that the residual degrees of any two vertices of the same degree never differ by more than 1. By maximizing the number of available vertices, we can not get stuck adding a self-loop or multiple edge. From this, we gather that if, for some degree $k$, there exists a vertex $j$ such that $\hat{d}_j = 0$, then for all vertices of degree $k$, their residuals must be either 0 or 1. This means that $\sum_{d_j=k}\hat{d_j} = \hat{\dv}_k\geq \hat{\jdm}_{k,l}$ for every other $l$ from Observation~\ref{indicator}.

From the initial conditions, we have that for every $k,l$ $\jdm_{k,l}\leq \dv_k\dv_l$. $\dv_k=\hat{\dv}_k$ provided that all degree $k$ vertices have non-zero residuals. Otherwise, for any unprocessed pair, $\jdm_{k,l}\leq \min\{\hat{\dv}_k,\hat{\dv}_l\}\leq \hat{\dv}_k\hat{\dv}_l$. For the $k,k$ case, it is clear that $\jdm_{k,k}\leq \hat{\dv}_k\leq {\hat{\dv}_k\choose 2}$. Therefore, the residual joint degree matrix and degree sequence will always be feasible, and the algorithm can always continue.\qed \end{proof}

A natural question is that since the joint degree distribution contains all of the information in the degree distribution, do the joint degree distribution necessary conditions easily imply the Erd\H{o}s-Gallai condition? This can easily be shown to be true.

\begin{corollary} The necessary conditions for a joint degree matrix to be graphical imply that the associated degree vector satisfies the Erd\H{o}s-Gallai condition.\label{thm:egcond}
\end{corollary}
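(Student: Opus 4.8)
The plan is to leverage the work already done: the ``necessary'' conditions of the preceding theorem are in fact \emph{sufficient}, since Lemma~\ref{lem:algworks} exhibits an explicit construction (Algorithm~\ref{alg:greedy}) that, whenever $\jdm$ satisfies conditions (1)--(3), outputs a simple graph realizing $\jdm$. So the route I would take is the short chain of implications: the JDM conditions hold $\Rightarrow$ a simple realizing graph exists $\Rightarrow$ its degree sequence is graphical $\Rightarrow$ the Erd\H{o}s--Gallai condition holds. This is exactly why the paper can claim the implication ``can easily be shown to be true.''

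First I would argue that the graph $G$ produced by Algorithm~\ref{alg:greedy} has degree vector exactly $\dv$. The algorithm distributes, for each class, precisely $\jdm_{k,l}$ edges between the degree-$k$ and degree-$l$ vertices and $\jdm_{k,k}$ edges among the degree-$k$ vertices, so the number of endpoints incident to the degree-$k$ vertices equals $\jdm_{k,k}+\sum_l \jdm_{k,l}=k\dv_k$, matching the identity $\dv_k=\tfrac1k\bigl(\jdm_{k,k}+\sum_l\jdm_{k,l}\bigr)$ recorded in Section~3. Hence each of the $\dv_k$ vertices of class $k$ ends with degree exactly $k$, so $d(G)=\dv$.

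Second, since $G$ is by construction a \emph{simple} graph, its sorted degree sequence $\overline d=\langle d_1,\dots,d_n\rangle$ is, by definition, graphical. Applying the necessity direction of the Erd\H{o}s--Gallai theorem to $G$ then gives, for every $k\le n$,
\[
\sum_{i=1}^k d_i \le k(k-1)+\sum_{i=k+1}^n \min(d_i,k),
\]
which is precisely the Erd\H{o}s--Gallai condition for $\dv$. This completes the argument.

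The main (and essentially only) subtlety is recognizing that the corollary is immediate once the hard work---establishing sufficiency of the JDM conditions via the explicit construction---has already been carried out in Lemma~\ref{lem:algworks}; no independent combinatorial derivation of Erd\H{o}s--Gallai from the inequalities $\jdm_{k,l}\le\dv_k\dv_l$ and $\jdm_{k,k}\le\binom{\dv_k}{2}$ is required. If one insisted on a proof bypassing the construction, the alternative would be a direct counting argument: fix the top $k$ vertices of $\overline d$, bound the edges internal to this set by $\binom{k}{2}$ and the edges leaving it into each outside vertex $v_j$ by $\min(d_j,k)$. That route, however, still presupposes a realizing graph, so the constructive argument above is both cleaner and fully self-contained.
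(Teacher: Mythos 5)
Your argument is correct: since the paper has already established (Lemma~\ref{lem:algworks}) that the stated conditions suffice to construct a simple graph realizing $\jdm$, and any simple graph's degree sequence trivially satisfies the necessity direction of Erd\H{o}s--Gallai, the corollary follows. The paper itself omits the proof entirely (it only remarks that the claim ``can easily be shown to be true''), but your chain of implications is exactly the argument that makes this statement a \emph{corollary} of the preceding lemma rather than an independent theorem, so you have taken what is evidently the intended route.
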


\section{Uniformly Sampling Graphs with Monte Carlo Markov Chain (MCMC) Methods}\label{sec:mcmctheory}

We now turn our attention to uniformly sampling graphs with a given graphical joint degree matrix using MCMC methods. We return to the joint degree matrix configuration model. We can obtain a starting configuration for any graphical joint degree matrix by using Algorithm 1. This configuration consists of one complete bipartite component for each degree with a perfect matching selected. The transitions we use select any end point uniformly at random, then select any other end point in its degree neighborhood and swap the two edges that these neighbor. In Figure~\ref{fig:jdmconfig}, this is equivalent to selecting one of the square endpoints uniformly at random and then selecting another uniformly at random from the same connected component and then swapping the edges. A more complex version of this chain checks that this swap does not create a multiple edge or self-loop. Formally, the transition function is a randomized algorithm given by Algorithm~\ref{alg:transition}.
\begin{algorithm}
\caption{Markov Chain Transition Function, Input: a configuration $C$}
\begin{algorithmic}[1]
\STATE With probability $0.5$, stay at configuration $C$. Else:
\STATE Select any endpoint $e_1$ uniformly at random. It neighbors a vertex $v_1$ in configuration $C$
\STATE Select any $e_2$ u.a.r from $e_1$'s degree neighborhood. It neighbors $v_2$
\STATE (Optional: If the graph obtained from the configuration with edges $E\cup\{(e_1,v_2),(e_2,v_1)\}\setminus \{(e_1,v_1),(e_2,v_2)\}$ contains a multi-edge or self-loop, reject)
\STATE $E \leftarrow E\cup\{(e_1,v_2),(e_2,v_1)\}\setminus \{(e_1,v_1),(e_2,v_2)\}$
\end{algorithmic}\label{alg:transition}
\end{algorithm}

There are two chains described by Algorithm~\ref{alg:transition}. The first, $\cal A$ doesn't have step (4) and its state space is all pseudographs with the desired joint degree matrix. The second, $\cal B$ includes step (4) and only transitions to and from simple graphs with the correct joint degree matrix.

We remind the reader of the standard result that any irreducible, aperiodic Markov Chain with symmetric transitions converges to the uniform distribution over its state space. Both $\cal A$ and $\cal B$ are aperiodic, due to the self-loop to each state. From the description of the transition function, we can see that $\cal A$ is symmetric. This is less clear for the transition function of $\cal B$. Is it possible for two connected configurations to have a different number of feasible transitions in a given degree neighborhood? We show that it is not the case in the following lemma.

\begin{lemma} The transition function of $\cal B$ is symmetric.\label{lem:symmetric}
\end{lemma}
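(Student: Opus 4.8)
The plan is to prove symmetry at the level of configurations: for any two configurations $C$ and $C'$ in the state space of $\cal B$ joined by a single transition, I will argue that Algorithm~\ref{alg:transition} assigns equal probability to $C\to C'$ and to $C'\to C$. The starting observation is that one application of the swap alters the matching of exactly two endpoints. Hence if $C$ and $C'$ are connected, they differ precisely in the partners of one unordered pair $\{e_1,e_2\}$ lying in a common degree neighborhood, and this pair is the unique witness of the transition in either direction. Moreover the swap is an involution: applying it to $C$ with the pair $\{e_1,e_2\}$ produces $C'$, and applying it again to $C'$ with the same pair returns $C$.

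Next I would compute the probability that the transition function selects this witnessing pair. Writing $e_1\leftrightarrow v_1$ and $e_2\leftrightarrow v_2$ in $C$, the endpoint $e_1$ is chosen uniformly among all endpoints and $e_2$ uniformly among the remaining endpoints of its $k$-neighborhood, so each ordered pair is selected with probability $\tfrac12\cdot\tfrac1{N}\cdot\tfrac1{k\dv_k-1}$, where $N=\sum_k k\dv_k$ is the total number of endpoints and $k\dv_k$ is the size of the neighborhood. Both quantities depend only on $\jdm$ and its associated $\dv$, not on the current configuration, and $e_1,e_2$ remain class-$k$ endpoints after the swap, so the witnessing pair lies in the same neighborhood of the same size in both $C$ and $C'$. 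Summing over the two orderings $(e_1,e_2)$ and $(e_2,e_1)$ gives the same selection probability in each direction.

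The crux is step (4), the acceptance test, and this is the point that requires care because it is the only place where the two chains differ. Here I would observe that step (4) rejects a proposal solely on the basis of whether the resulting graph is simple, and that it rejects by staying at the current configuration rather than resampling; consequently the probability of an accepted swap equals its raw selection probability and never depends on how many other proposals happen to be feasible. Since $C$ and $C'$ both lie in the state space of $\cal B$, both correspond to simple graphs. The move $C\to C'$ lands on the simple graph of $C'$ and is therefore accepted, while the move $C'\to C$ lands on the simple graph of $C$ and is likewise accepted, so both moves survive step (4) with certainty and the acceptance condition is symmetric. This disposes of the worry that two connected configurations might present different numbers of feasible transitions: the rejection-as-self-loop design means feasible counts never enter the per-move probability, and in any case the involution already gives a bijection between the forward and reverse witnessing swaps. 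Combining the equal selection probabilities with the symmetric, certain acceptance yields $P(C\to C')=P(C'\to C)$, which is the claim. I expect step (4) — reconciling the acceptance test with the symmetry requirement — to be the only real obstacle, while the selection-probability bookkeeping is routine once the involution and the configuration-independence of the neighborhood sizes are in hand.
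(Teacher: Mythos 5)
Your proof is correct, and it takes a genuinely different route from the paper's. The paper does not compute transition probabilities at all; it instead asks whether two adjacent configurations could admit different numbers of feasible swaps, and answers no by exhibiting a bijection between the swaps that the move $e_1v_1,\,e_2v_2 \to e_1v_2,\,e_2v_1$ renders infeasible and those it renders feasible (if $e_x,e_y$ was blocked only by $e_1$ in $C_1$ and is allowed in $C_2$, then the swap $e_1,e_x$ was feasible in $C_1$ and becomes infeasible in $C_2$, and conversely). You instead observe that under the reject-and-hold semantics of step (4), the off-diagonal entry $P(C\to C')$ is exactly the raw probability of selecting the unique witnessing pair, which depends only on $N=\sum_k k\dv_k$ and the neighborhood size $k\dv_k$ --- quantities fixed by $\jdm$ and identical in $C$ and $C'$ --- and that the acceptance test passes in both directions because both states are simple. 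This is the more economical argument: the count of feasible swaps only affects the diagonal (self-loop) probability, which is irrelevant to symmetry, so your proof isolates precisely what the lemma needs, whereas the paper proves a stronger structural invariant (invariance of the feasible-swap count within a degree neighborhood) whose connection to symmetry of the off-diagonal entries is left implicit. What the paper's version buys is that invariant itself, which would become essential if rejection were implemented by resampling among feasible moves rather than by holding; your proof correctly identifies that Algorithm~\ref{alg:transition} as written does not do this, and your aside that the involution structure gives a bijection between forward and reverse witnesses covers the uniqueness point cleanly.
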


\begin{proof}
Let $C_1$ and $C_2$ be two neighboring configurations in $\cal B$. This means that they differ by exactly 4 edges in exactly 1 degree neighborhood. Let this degree be $k$ and let these edges be $e_1v_1$ and $e_2v_2$ in $C_1$ whereas they are $e_1v_2$ and $e_2v_1$ in $C_2$. We want to show that $C_1$ and $C_2$ have exactly the same number of feasible $k$-degree swaps.

Without loss of generality, let $e_x,e_y$ be a swap that is prevented by $e_1$ in $C_1$ but allowed in $C_2$. This must mean that $e_x$ neighbors $v_1$ and $e_y$ neighbors some $v_y\neq v_1,v_2$. Notice that the swap $e_1e_x$ is currently feasible. However, in $C_2$, it is now infeasible to swap $e_1,e_x$, even though $e_x$ and $e_y$ are now possible.

If we consider the other cases, like $e_x,e_y$ is prevented by both $e_1$ and $e_2$, then after swapping $e_1$ and $e_2$, $e_x,e_y$ is still infeasible. If swapping $e_1$ and $e_2$ makes something feasible in $C_1$ infeasible in $C_2$, then we can use the above argument in reverse. This means that the number of feasible swaps in a $k$-neighborhood is invariant under $k$-degree swaps.\qed
\end{proof}

The remaining important question is the connectivity of the state space over these chains. It is simple to show that the state space of $\cal A$ is connected. We note that it is a standard result that all perfect matchings in a complete bipartite graph are connected via edge swaps~\cite{taylor}. Moreover, the space of pseudographs can be seen exactly as the set of all perfect matchings over the disconnected complete bipartite degree neighborhoods in the joint degree matrix configuration model. The connectivity result is much less obvious for $\cal B$. We adapt a result of Taylor~\cite{taylor} that all graphs with a given degree sequence are connected via edge swaps in order to prove this. The proof is inductive and follows the structure of Taylor's proof.

\begin{theorem} Given two simple graphs, $G_1$ and $G_2$ of the same size with the same joint degree matrix, there exists a series of endpoint rewirings to transform $G_1$ into $G_2$ (and vice versa) where every intermediate graph is also simple.\label{thm:swaps}
\end{theorem}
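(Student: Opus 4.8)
The plan is to reduce both graphs to a common canonical graph $G^*$ determined by the joint degree matrix, using endpoint rewirings that stay within simple graphs; since rewirings are reversible, concatenating the reduction of $G_1$ with the reverse of the reduction of $G_2$ yields the desired series. The first thing I would record is that an endpoint rewiring in the $k$-neighborhood is exactly a degree-preserving edge switch $\{v_1,w_1\},\{v_2,w_2\}\mapsto\{v_2,w_1\},\{v_1,w_2\}$ in which the two swapped endpoints $v_1,v_2$ have the same degree $k$. Such a switch leaves the type of each edge unchanged (edge one stays $(k,\deg w_1)$, edge two stays $(k,\deg w_2)$), so it preserves $\jdm$; conversely, every $\jdm$-preserving two-edge switch is of this equal-degree form. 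This is precisely why we cannot simply cite Taylor's theorem as a black box: Taylor's switches connect the larger class of all graphs with the given degree sequence, but an optimal Taylor path may pass through graphs with a \emph{different} $\jdm$. We must therefore rerun Taylor's induction using only equal-degree switches and never leaving the $\jdm$-class.

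I would take $G^*$ to be the balanced graph produced by a deterministic version of Algorithm~\ref{alg:greedy} (balanced neighbor-degree profiles, together with a fixed canonical near-regular graph inside each edge class $(k,l)$), and argue reachability by induction on the degree classes from the largest degree downward, mirroring the order in which Algorithm~\ref{alg:greedy} and Taylor's proof process vertices. There are two kinds of freedom to eliminate at each stage, both realizable by endpoint rewirings. First, the matrix fixes only the total $\jdm_{k,l}$, not how many degree-$l$ neighbors a particular degree-$k$ vertex has; I would canonicalize these neighbor-degree \emph{profiles} by transferring a degree-$l$ neighbor from one degree-$k$ vertex to another via an equal-degree switch (an endpoint rewiring in the $k$-neighborhood between a $(k,l)$-edge and a $(k,l')$-edge), which is precisely the elementary move on contingency tables with fixed margins and hence suffices to reach the balanced profile of Lemma~\ref{obs:nearlyequal}. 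Second, once the profiles are canonical the edges of class $(k,l)$ form an independent bipartite graph (or, for $k=l$, an ordinary graph) with a now-fixed degree sequence, and I would canonicalize each such block using the standard fact that bipartite degree sequences are connected under swaps, each of which is again an endpoint rewiring in the $k$- or $l$-neighborhood and leaves profiles untouched. Freezing the fully canonicalized highest class and recursing on the remainder gives $G^*$.

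The hard part will be exactly what is hard in Taylor's original argument: showing that each corrective rewiring can actually be carried out \emph{while keeping the graph simple}, now under the extra restriction that only equal-degree switches are permitted. Concretely, when I want to replace a neighbor $w$ of $v$ by the canonical neighbor $u$, the direct switch with a same-degree partner $v'$ may be blocked because it would create a multi-edge ($v'$ already adjacent to $w$, or $v$ already adjacent to $u$) or a self-loop in a $(k,k)$ class. I would handle this with the usual alternating-path case analysis: if the one-step switch is obstructed, locate a third edge and perform a short sequence of equal-degree switches achieving the same net change, proving that such an edge must exist from the counting identity of Observation~\ref{sums} and the balanced invariant of Lemma~\ref{obs:nearlyequal}, which together guarantee enough available endpoints, exactly as in the feasibility argument of Lemma~\ref{lem:algworks}. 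I would also verify that these detours never touch edges in already-frozen higher classes, so that the induction is well-founded. Checking that this case analysis still closes under the equal-degree restriction, rather than with Taylor's unrestricted switches, is the one genuinely new ingredient, and it is where I expect the bulk of the work to lie.
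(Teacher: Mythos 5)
Your high-level strategy --- reduce both graphs to a canonical representative $G^*$ by first balancing the neighbor-degree profiles of each degree class and then canonicalizing each $(k,l)$ block as a bipartite graph with fixed degrees --- is a legitimate and genuinely different route from the paper's, which instead inducts on the number of vertices: it fixes the neighborhood of a single highest-degree vertex $v_1$ in both graphs, deletes $v_1$, and recurses on the smaller graphs with identical joint degree matrices. Your observation that any edge between a degree-$k$ and a degree-$l$ vertex is automatically a $(k,l)$-class edge, so that within-block canonicalization reduces cleanly to connectivity of bipartite degree-sequence realizations under swaps, is correct and attractive, and your identification of the $\jdm$-preserving switches as exactly the equal-degree switches is right.

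However, there is a genuine gap at the crux, and you have located it yourself without closing it. The entire technical content of the theorem is the claim that every needed elementary move (profile transfer or block swap) that is obstructed by a potential multi-edge or self-loop can be replaced by a short sequence of equal-degree switches, each of which keeps the graph simple. You assert that the existence of the unblocking edges follows ``from the counting identity of Observation~\ref{sums} and the balanced invariant of Lemma~\ref{obs:nearlyequal}, exactly as in the feasibility argument of Lemma~\ref{lem:algworks}.'' This appeal is not valid: Observation~\ref{sums} and Lemma~\ref{obs:nearlyequal} are properties of the intermediate states of the greedy construction of Algorithm~\ref{alg:greedy}, where the algorithm controls the whole graph and maintains balance by fiat. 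An arbitrary simple realization of $\jdm$ --- which is exactly what you must perform switches inside of --- need not satisfy the balanced degree invariant at all; indeed, unbalanced profiles are precisely what your first canonicalization stage is trying to repair. The existence arguments actually required are pigeonhole arguments on neighborhoods of equal-degree vertices (e.g., if $\deg(u)=\deg(w)$, $u$ is adjacent to $v_1$ and $w$ is not, then $\Gamma(w)\setminus\Gamma(u)\neq\emptyset$), carried through a multi-case analysis; the paper needs four main cases, including sub-cases where the unblocking detour requires three swaps and temporarily disturbs an already-placed edge before restoring it. For the same reason, your stated intention to ``verify that these detours never touch edges in already-frozen higher classes'' is likely too strong --- the analogous detours in the paper do touch fixed edges and must undo the damage --- so the well-foundedness of your freeze-and-recurse induction needs the weaker but more delicate ``touch but restore'' formulation. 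Until the case analysis is actually carried out under the equal-degree restriction, the proposal is a plan rather than a proof.
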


\begin{proof}
This proof will proceed by induction on the number of nodes in the graph. The base case is when there are 3 nodes. There are 3 realizable JDMs. Each is uniquely realizable, so there are no switchings available.

\begin{figure}[ht]
\centering\includegraphics[width=5cm]{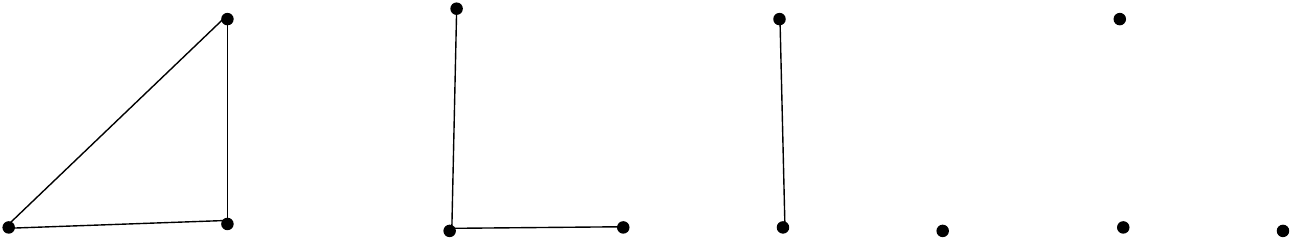}
\caption{The three potential joint degree distributions when $n=3$.}
\label{fig:basecase}
\end{figure}

Assume that this is true for $n=k$. Let $G_1$ and $G_2$ have $k+1$ vertices. Label the nodes of $G_1$ and $G_2$ $v_1\cdots v_{k+1}$ such that $deg(v_1)\geq deg(v_2)\geq \cdots \geq deg(v_{k+1})$. Our goal will be to show that both graphs can be transformed in $G_1'$ and $G_2'$ respectively such that $v_1$ neighbors the same nodes in each graph, and the transitions are all through simple graphs. Now we can remove $v_1$ to create $G_1''$ and $G_2''$, each with $n-1$ nodes and identical JDMs. By the inductive hypothesis, these can be transformed into one other and the result follows.

\begin{figure}[H]
\begin{minipage}[b]{0.45\linewidth}\centering
\includegraphics[width=0.5\columnwidth]{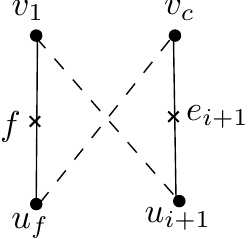}
\caption{The dotted edges represent the troublesome edges that we may need to swap out before we can swap $v_1$ and $v_c$. }
\label{fig:case2}
\end{minipage}\hfill
\begin{minipage}[b]{0.48\linewidth}\centering
\includegraphics[width=.9\columnwidth]{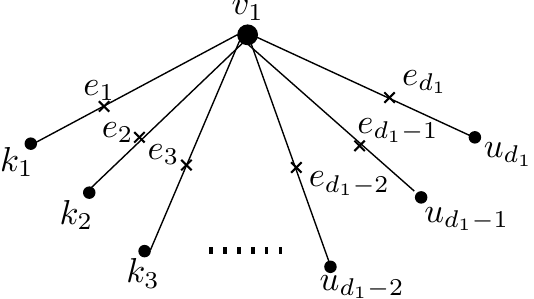}
\caption{The disk is $v_1$. The crosses are $e_1\cdots e_{d_1}$.}\label{fig:lnodes}
\end{minipage}
\end{figure}

We will break the analysis into two cases. For both cases, we will have a set of target edges, $e_1,e_2\cdots e_{d_1}$ that we want $v_1$ to be connected to. Without loss of generality, we let this set be the edges that $v_1$ currently neighbors in $G_2$. We assume that the edges are ordered in reverse lexicographic order by the degrees of their endpoints. This will guarantee that the resulting construction for $v_1$ is graphical and that we have a non-increasing ordering on the requisite endpoints. Now, let $k_i$ denote the endpoint in $G_2$ for edge $e_i$ that isn't $v_1$. 

\noindent {\bf Case 1) } For the first case, we will assume that $v_1$ is already the endpoint of all edges $e_1,e_2\cdots e_{d_1}$ but that all of the $k_i$ may not be assigned correctly as in Figure~\ref{fig:lnodes}. Assume that $e_1,e_2\cdots e_{i-1}$ are all edges $(v_1,k_1)\cdots (v_1,k_{i-1})$ and that $e_i$ is the first that isn't matched to its appropriate $k_i$. 

Call the current endpoint of the other endpoint of $e_i$ $u_i$. We know that $deg(k_i)=deg(u_i)$ and that $k_i$ currently neighbors $deg(k_i)$ other nodes, $\Gamma(k_i)$. We have two cases here. One is that $v_1\in \Gamma(k_i)$ but via edge $f$ instead of $e_i$. Here, we can swap $v_1$ on the endpoints of $f$ and $e_i$ so that the edge $v_1-e_i-k_i$ is in the graph. $f$ can not be an $e_j$ where $j<i$ because those edges have their correct endpoints, $k_j$ assigned. This is demonstrated in Figure~\ref{fig:simplecase1}.

The other case is that $v_1\not \in \Gamma(k_i)$. If this is the case, then there must exist some $x\in \Gamma(k_i)\setminus \Gamma(u_i)$ because $d(u_i)=d(k_i)$ and $u_i$ neighbors $v_1$ while $k_i$ doesn't.  Therefore, we can swap the edges $v_1-e_i-u_i$ and $x-f-k_i$ to $v_1-e_i-k_i$ and $x-f-u_i$ without creating any self-loops or multiple edges. This is demonstrated in Figure~\ref{fig:simplecase1}.

\begin{figure}[H]
\begin{minipage}[b]{0.48\linewidth}\centering
\centering\includegraphics[width=\columnwidth]{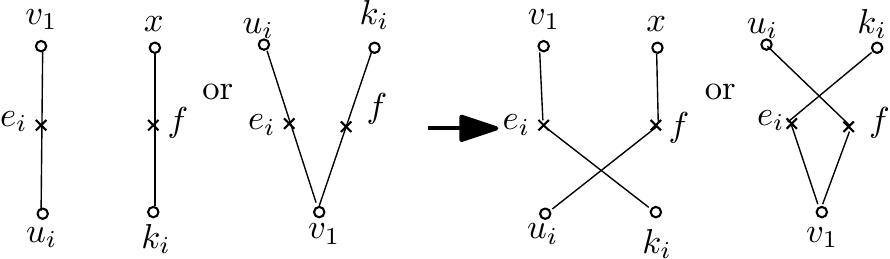}
\caption{The two parts of Case (1).}
\label{fig:simplecase1}
\end{minipage}\hfill
\begin{minipage}[b]{0.48\linewidth}\centering
\centering\includegraphics[width=\columnwidth]{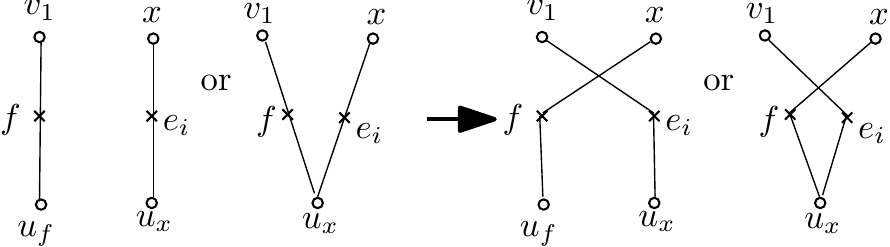}
\caption{The two parts of Case (2)}
\label{fig:simplecase2}
\end{minipage}
\end{figure}

Therefore, we can swap all of the correct endpoints onto the correct edges.

{\bf Case 2) } For the second case, we assume that the edges $e_1,\cdots e_{d_1}$ are distributed over $l$ nodes of degree $d_1$. We want to show that we can move all of the edges $e_1\cdots e_{d_1}$ so that $v_1$ is an endpoint. If this is achievable, we have exactly Case 1.

Let $e_1,\cdots e_{i-1}$ be currently matched to $v_i$ and let $e_i$ be matched to some $x$ such that $deg(x)=d_1$. Let $f$ be an edge currently matched to $v_1$ that is not part of $e_1\cdots e_{d_1}$ and let its other endpoint be $u_f$. Let the other end point of $e_i$ be $u_{x}$ as in Figure~\ref{fig:simplecase2}.

We now have several initial cases that are all easy to handle. First,
if $v,x,u_x,u_f$ are all distinct and $(v,u_x)$ and $(x,u_f)$ are not edges then we can easily swap $v$ and $x$ such that the edges go from $v-f-u_f$ and $x-e_i-u_x$ to $v-e_i-u_x$ and $x-f-u_f$.
Next, if $u_f=u_x$ then we can simply swap $v_1$ onto $e_i$ and $x$ onto $f$ and, again, $v_1$ will neighbor $e_i$. This will not create any self-loops or multiple edges because the graph itself will be isomorphic. This situations are both shown in Figure~\ref{fig:simplecase2}.

The next case is that $x=u_f$. If we try to swap $v_1$ onto $e_i$ then we create a self-loop from $x$ to $x$ via $f$. Instead, we note that since the JDM is graphical, there must exist a third vertex $y$ of the same degree as $v_1$ and $x$ that does not neighbor $x$. Now, $y$ neighbors an edge $g$, and we can swap $x-f$ and $y-g$ to $x-g$ and $y-f$. The edges are $v_1-f-y$ and $x-e_i-u_i$ and $e_i$ can be swapped onto $v_1$ without conflict.

The cases left to analyze are those where the nodes are all distinct and $(v_1,u_x)$ or $(x,u_f)$ are edges in the graph. We will analyze these separately.

{\bf Case 2a) } If $(v_1,u_x)$ is an edge in the graph, then it must be so through some edge named $g$. Note that this means we have $v_1-g-u_x$ and $x-e_i-u_x$. We can swap this to $v_1-e_i-u_x$ and $x-g-u_x$ and have an isomorphic graph provided that $g$ is not some $e_j$ where $j<i$. This is the top case in Figure~\ref{fig:simplecase2a}.

If $g$ is some $e_j$ then it must be that $u_x=k_j$. This is distinct from $k_i$. $deg(k_j)=deg(k_i)$ so there must exist some edge $h$ that $k_i$ neighbors with its other endpoint being $y$. There are again three cases, when $y\neq x,v_1$ $y=x$ and when $y=v_1$. These are the bottom three rows illustrated in Figure~\ref{fig:simplecase2a}. The first is the simplest. Here, we can assume that $k_j$ does not neighbor $y$ (because it neighbors $v_1$ and $x$ that $k_i$ does not) so we can swap $k_j$ onto $h$ and $k_i$ onto $e_1$. This has removed the offending edge, and we can now swap $v_1$ onto $e_1$ and $x$ onto $f$.

When $y=x$, we first swap $k_i$ onto $e_j$ and $k_j$ onto $h$. Next, we swap $v$ onto $e_i$ and $x$ onto $f$ as they no longer share an offending edge.

Finally, when $y=v_1$, we use a sequence of three swaps. The first is $k_i$ onto $e_j$ and $k_j$ onto $h$. The next is $v_1$ onto $e_1$ and $x$ onto $h$. Finally, we swap $k_j$ back onto $e_j$ and $k_i$ onto $e_i$.

\begin{figure}
\centering\includegraphics[width=0.7\columnwidth]{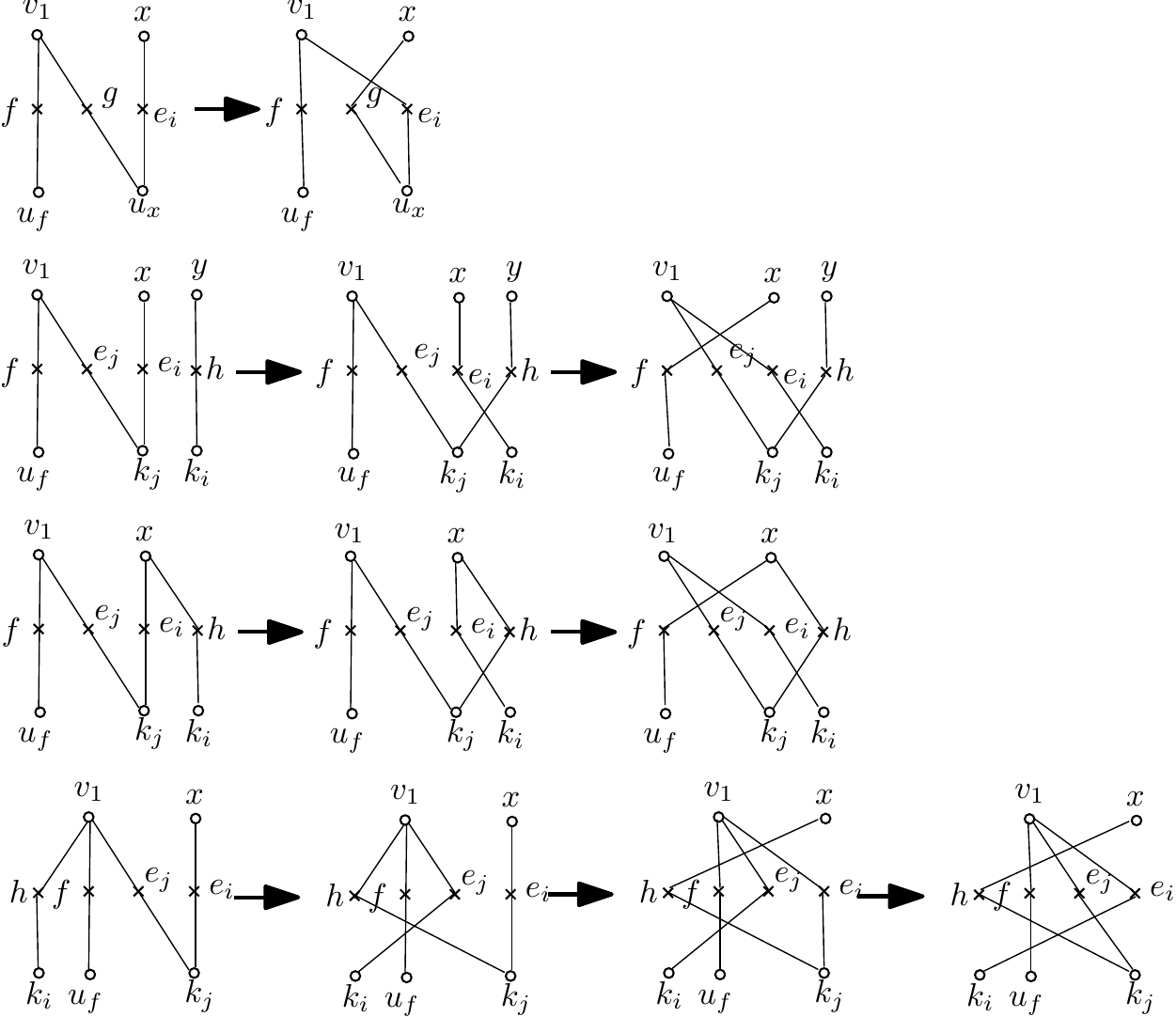}
\caption{A graphical representation of the situations discussed in Case (2a).}
\label{fig:simplecase2a}
\end{figure}

{\bf Case 2b) } If $(x,u_f)$ is an edge in the graph, then it must be through some edge $g$ such that $x-g-u_f$ and $x-e_i-u_x$. Without loss of generality, assume that $f$ is the only edge neighboring $v_1$ that isn't an $e_j$. Since $f$ doesn't neighbor $v_1$ in $G_2$, there must either exist a $w$ with $deg(w)=deg(u_f)$ or $v_s$ with $deg(v_s)=d(v_1)$. This relies critically upon the fact that $f$ and $g$ are the same class edge. If there is a $w$, then it doesn't neighbor $v_1$ (or we can apply the above argument to find a $w'$) and it must have some neighbor $y\in \Gamma(w)\setminus \Gamma(u)$ through edge $h$. Therefore, we can swap $u_f$ onto $h$ and $w$ onto $f$. This removes the offending edge, and we can now swap $v_1$ onto $e_i$ and $x$ onto $f$.

If $v_s$ exists instead, then by the same argument, there exists some edge $h$ with endpoint $u_s$ such that $v_s\notin \Gamma(u_f)$ and $u_s\notin \Gamma(x)$. Therefore, we can swap $v_s-h$ and $x-g$ to $v_s-g$ and $x-h$. This again removes the troublesome edge and allows us to swap $v_1$ onto $e_i$.

\begin{figure}
\centering\includegraphics[width=0.6\columnwidth]{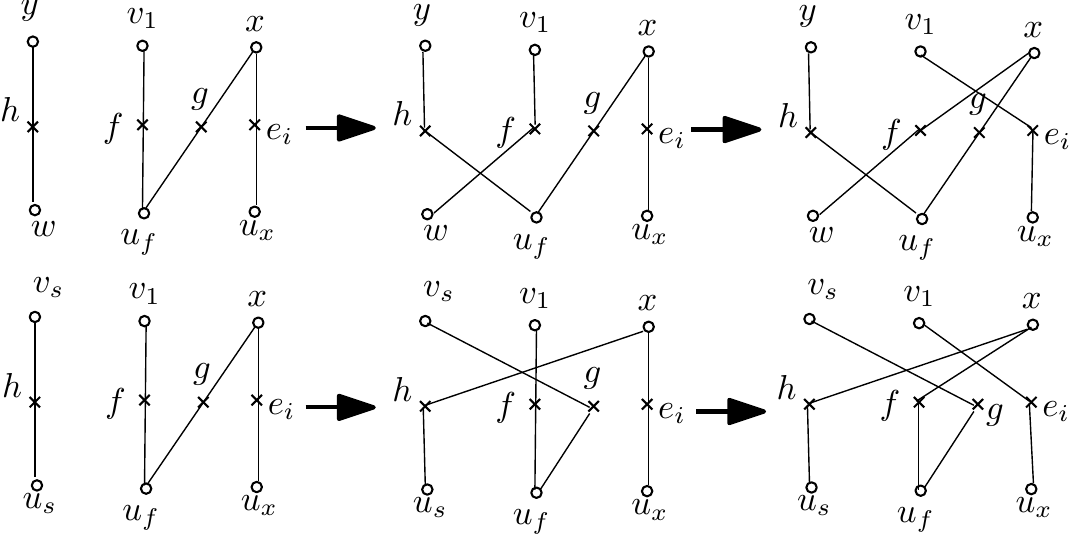}
\caption{A graphical representation of the situations discussed in Case (2b)}
\label{fig:simplecase2b}
\end{figure}

Therefore, given any node, a precise set of edges that it should neighbor, and a set of vertices that are the endpoints of those edges, we can use half-edge-rewirings to transform any graph $G$ to $G'$ that has this property, provided the set of edges is graphical. \qed

\end{proof}

Now that we have shown that both $\cal A$ and $\cal B$ converge to the uniform distribution over their respective state spaces, the next question is how quickly this happens. Note that from the proof that the state space of $\cal B$ is connected, we can upperbound the diameter of the state space by $3m$. The diameter provides a lower bound on the mixing time. In the next section, we will empirically estimate the mixing time to be also linear in $m$.

\section{Estimating the Mixing Time of the Markov Chain}

The Markov chain $\cal A$ is very similar to one analyzed by Kannan, Tetali and Vempala~\cite{ktv}. We can exactly use their canonical paths and analysis to show that the mixing time is polynomial. This result follows directly from Theorem 3 of~\cite{ktv} for chain $\cal A$. This is because the joint degree matrix configuration model can be viewed as $|\dv|$ complete, bipartite, and disjoint components. These components should remain disjoint, so the Markov Chain can be viewed as a `meta-chain' which samples a component and then runs one step of the Kannan, Tetali and Vempala chain on that component. Even though the mixing time for this chain is provably polynomial, this upper bound is too large to be useful in practice.

The analysis to bound the mixing time for $\cal B$ chain is significantly more complicated. One approach is to use the canonical path method to bound the congestion of this chain. The standard trick is to define a path from $G_1$ to $G_2$ that fixes the misplaced edges identified by $G_1 \oplus G_2$ in a globally ordered way. However, this is difficult to apply to chain $\cal B$ because fixing a specific edge may not be atomic, i.e. from the proof of Theorem~\ref{thm:swaps} it may take up to 4 swaps to correctly connect a vertex with an endpoint if there are conflicts with the other degree neighborhoods. These swaps take place in other degree neighborhoods and are not local moves. Therefore, this introduces new errors that must be fixed, but can not be incorporated into $G_1 \oplus G_2$. In addition, step (4) also prevents us from using path coupling as a proof of the mixing time.

Given that bounding the mixing time of this chain seems to be difficult without new techniques or ideas, we use a series of experiments that substitute the $\emph{autocorrelation time}$ for the mixing time. 

\subsection{Autocorrelation Time}

Autocorrelation time is a quantity that is related to the mixing time and is popular among physicists. We will give a brief introduction to this concept, and refer the reader to Sokal's lecture notes for further details and discussion~\cite{sokal}.

The autocorrelation of a signal is the cross-correlation of the signal with itself given a lag $t$. More formally, given a series of data $\langle X_i \rangle$ where each $X_i$ is a drawn from the same distribution $X$ with mean $\mu$ and variance $\sigma$, the autocorrelation function is $R_X(t) = \frac{E[(X_i-\mu)(X_{i-t}-\mu)]}{\sigma^2}$.

Intuitively, the inherent problem with using a Markov Chain sampling method is that successive states generated by the chain may be highly correlated. If we were able to draw independent samples from the stationary distribution, then the autocorrelation of that set of samples with itself would go to 0 as the number of samples increased. The autocorrelation time is capturing the size of the gaps between sampled states of the chain needed before the autocorrelation of this `thinned' chain is very small. If the thinned chain has 0 autocorrelation, then it must be exactly sampled from the stationary distribution. In practice, when estimating the autocorrelation from a finite number of samples, we do not expect it to go to exactly 0, but we do expect it to `die away' as the number of samples and gap increases.

\begin{definition}
The exponential autocorrelation time is $\tau_{exp,X} = \limsup_{t\rightarrow \infty}\frac{t}{-\log|R_X(t)|}$~\cite{sokal}.

\end{definition}

\begin{definition}
The integrated autocorrelation time is $\tau_{int,X} = \frac 1 2 \sum_{t = -\infty}^{\infty}R_X(t) = \frac 1 2 + \sum_{t=1}^{\infty}R_X(t)$~\cite{sokal}.
\end{definition}

The difference between the exponential autocorrelation time and the integrated autocorrelation time is that the exponential autocorrelation time measures the time it takes for the chain to reach equilibrium after a cold start, or `burn-in' time. The integrated autocorrelation time is related to the increase in the variance over the samples from the Markov Chain as opposed to samples that are truly independent. Often, these measurements are the same, although this is not necessarily true. 

We can substitute the autocorrelation time for the mixing time because they are, in effect, measuring the same thing - the number of iterations that the Markov Chain needs to run for before the difference between the current distribution and the stationary distribution is small. We will use the integrated autocorrelation time estimate.

\subsection{Experimental Design}
We used the Markov Chain $\cal B$ in two different ways. First, for each of the smaller datasets, we ran the chain for 50,000 iterations 15 times. We used this to calculate the the autocorrelation values for each edge for each lag between 100 and 15,000 in multiples of 100. From this, we calculated the estimated integrated autocorrelation time, as well as the iteration time for the autocorrelation of each edge to drop under a threshold of 0.001. This is discussed in Section~\ref{sec:threshold}.

We also replicated the experimental design of Raftery and Lewis~\cite{Raftery95thenumber}. Given our estimates of the autocorrelation time for each size graph in Section~\ref{sec:threshold}, we ran the chain again for long enough to capture 10,000 samples where each sample had $x$ iterations of the chain between them. $x$ was chosen to vary from much smaller than the estimated autocorrelation time, to much larger. From these samples, we calculated the sample mean for each edge, and compared it with the actual mean from the joint degree matrix. We looked at the total variational distance between the sample means and actual means and showed that the difference appears to be converging to 0. We chose the mean as an evaluation metric because we were able to calculate the true means theoretically. We are unaware of another similarly simple metric.

We used the formulas for empirical evaluation of mixing time from page 14 of Sokal's survey~\cite{sokal}. In particular, we used the following:
\begin{itemize}
\item The sample mean is $\overline{\mu} = \frac 1 n \sum_{i=1}^n x_i$.
\item The sample unnormalized autocorrelation function is $\hat{C}(t) = \frac 1 {n-t} \sum_{i=1}^{n-t} (x_i-\overline{\mu})(x_{i+t}-\overline{\mu})$.
\item The natural estimator of $R_X(t)$ is $\hat{\rho}(t)=\hat{C}(t)/\hat{C}(0)$
\item The estimator for $\tau_{int,X}$ is $\hat{\tau}_{int}=\frac 1 2 \sum_{t=-(n-1)}^{n-1}\lambda(t)\hat{\rho}(t)$ where $\lambda$ is a `suitable' cutoff function.
\end{itemize}

For a sequence of length $x$, calculating the autocorrelation of gap $t$ requires $(x-t)^2$ dot products. Our experiments require that we calculate the autocorrelation for each possible edge in a graph for many lags. Thus running the full set of experiments requires $O(|V|^2x\log x)$ time and is prohibitive when $V$ is large. Note that $x$ must necessarily be at least $\Theta(E)$ as well, since the mixing time can not be sub-linear in the number of edges.  In Section~\ref{sec:edgevtime} we will discuss results on the smaller datasets (AdjNoun, Dolphins, Football, Karate, and LesMis) that suggest a more feasible method for estimating autocorrelation time for larger graphs. We use this method to evaluate the autocorrelation time for the larger graphs as well, and present all of the results together. Rather than running the chain for 15,000 steps for the larger graphs, we selected more appropriate stopping conditions that were generally $10|E|$ based on the results for smaller graphs.

\paragraph*{Data Sets}

We have used several publicly available datasets, Word Adjacencies~\cite{adjnoun}, Les Miserables~\cite{lesmis}, American College Football~\cite{football}, the Karate Club~\cite{karate}, the Dolphin Social Network~\cite{dolphins}, C. Elegans Neural Network (celegans)~\cite{celegans1, celegans2}, Power grid (power)~\cite{power}, Astrophysics collaborations (astro-ph)~\cite{astro-ph}, High-Energy Theory collaborations (hep-th)~\cite{hep-th}, Coauthorships in network science (netscience)~\cite{netscience}, and a snapshot of the Internet from 2006 (as-22july)~\cite{as-22july}. In the following $|V|$ is the number of nodes, $|E|$ is the number of edges and $|\jdm|$ is the number of non-zero entries in the joint degree matrix.

\begin{table}[ht]
\begin{center}
\begin{tabular}{|c|c|c|c|}
\hline
Dataset & $|E|$ &$|V|$ &$|\jdm|$\\
\hline
AdjNoun & 425 & 112 & 159\\
\hline
as-22july & 48,436 & 22,962 & 5,496\\
\hline
astro-ph & 121,251 & 16,705 & 11,360\\
\hline
celegans & 2,359 & 296 &642\\
\hline
Dolphins & 159 & 62 & 61\\
\hline
Football & 616 & 115 &18\\
\hline
hep-th & 15,751 & 8,360 & 629\\
\hline
Karate & 78 & 34 &40\\
\hline
LesMis & 254 & 77 &99\\
\hline
netscience & 2,742 & 1,588 & 184\\
\hline
power & 6,594 & 4,940 & 108\\
\hline
\end{tabular}
\end{center}
\caption{Details about the datasets, $|V|$ is the number of nodes, $|E|$ is the number of edges and $|\jdm|$ is the number of unique entries in the $\jdm$.}
\end{table}


\subsection{Relationship Between Mean of an Edge and Autocorrelation}\label{sec:edgevtime}

For each of the smaller graphs, AdjNoun, Dolphins, Football, Karate and LesMis, we ran the Markov Chain 10 times for 50,000 iterations and collected an indicator variable for each potential edge. For each of these edges, and each run, we calculated the autocorrelation function for values of $t$ between 100 and 15,000 in multiples of 100. For each edge, and each run, we looked at the $t$ value where the autocorrelation function first dropped below the threshold of $0.001$. We then plotted the mean of these values against the mean of the edge, i.e. if it connects vertices of degree $d_i$ and $d_j$ (where $d_i \neq d_j$) then $\mu_e = \jdm_{d_i,d_j}/d_id_j$ or $\mu_e = \jdm_{d_i,d_i}/{d_i \choose 2}$ otherwise. The three most useful plots are given in Figures~\ref{fig:adjnoun_mean} and~\ref{fig:karate_mean} as the other graphs did not contain a large range of mean values.

\begin{figure}

\begin{minipage}[b]{0.5\linewidth}\centering
\centering\includegraphics[width=\columnwidth]{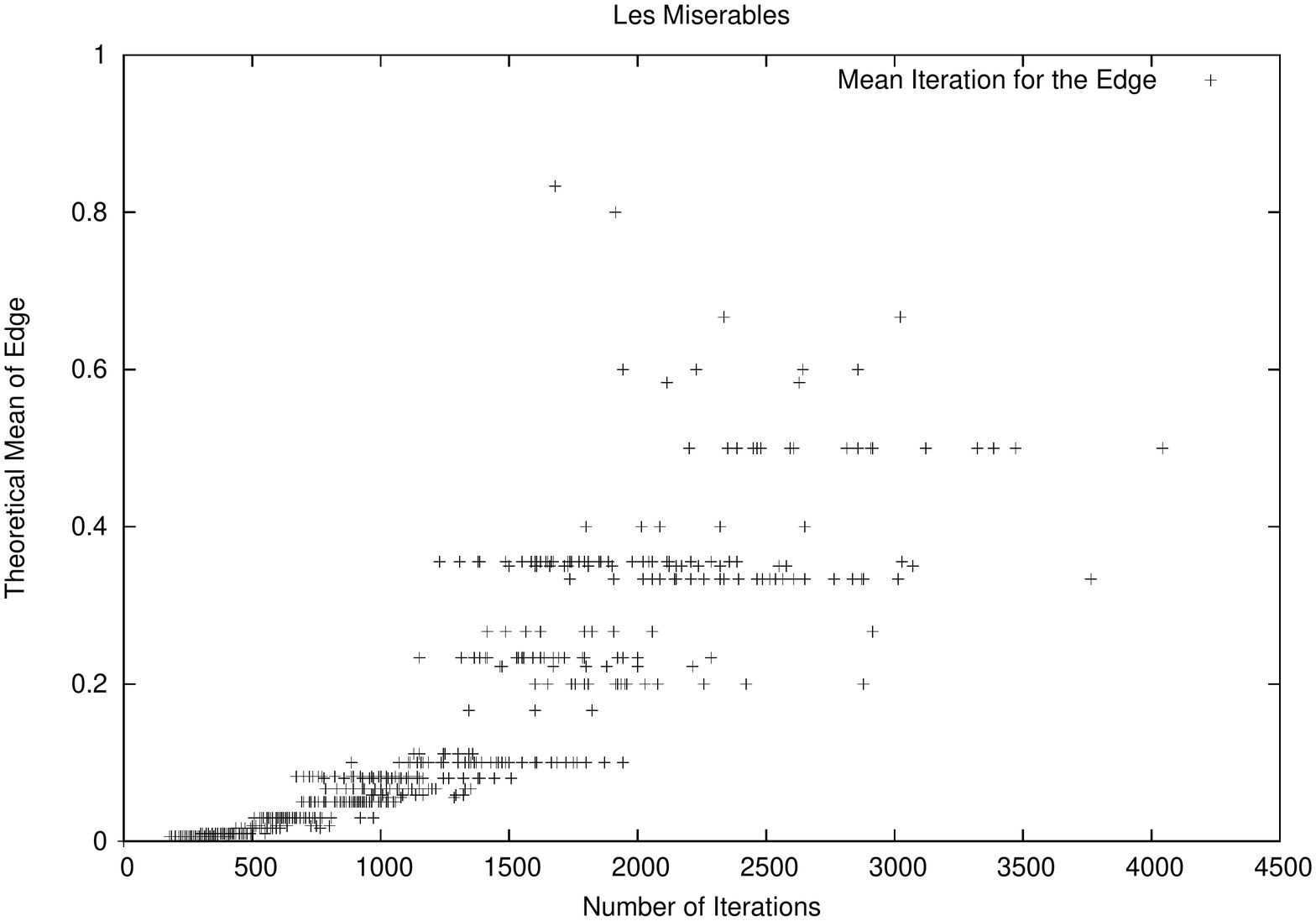}
\end{minipage}\hfill
\begin{minipage}[b]{0.5\linewidth}\centering
\centering\includegraphics[width=\columnwidth]{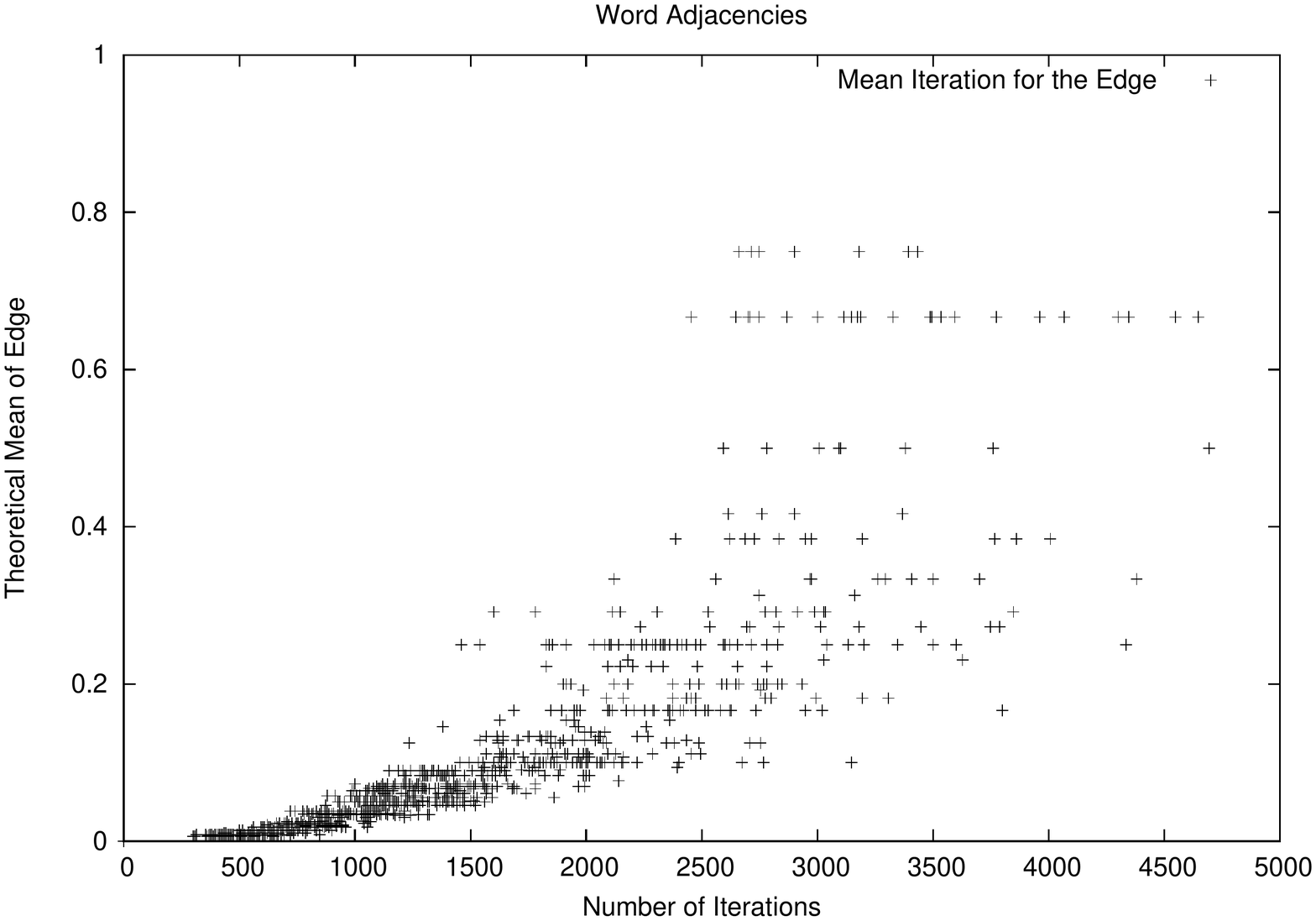}

\end{minipage}\hfill
\caption{The time for an edge's estimated autocorrelation function to pass under the threshold of 0.001 versus $\mu_e$ for that edge for LesMis and AdjNoun from L to R.}
\label{fig:adjnoun_mean}
\end{figure}

From these results, we identified a potential relationship between $\mu_e$ and the time to pass under a threshold. Unfortunately, none of our datasets contained a significant number of edges with larger $\mu_e$ values, i.e. between 0.5 and 1. In order to test this hypothesis, we designed a synthetic dataset that contained the many edges with values of $\mu_e$ at $\frac i {20}$ for $i=1,\cdots 20$. We describe the creation of this dataset in the appendix.

The final dataset we created had 326 edges, 194 vertices and 21 distinct $\jdm$ entries. We ran the Markov Chain 200 times for this synthetic graph. For each run, we calculated the threshold value for each edge. 
Figure~\ref{fig:karate_mean} shows the edges' mean vs its mean time for the autocorrelation value to pass under 0.001. We see that there is a roughly symmetric curve that obtains its maximum at $\mu_e=0.5$. 

\begin{figure}[!h]
\begin{minipage}[b]{0.5\linewidth}\centering
\centering\includegraphics[width=\columnwidth]{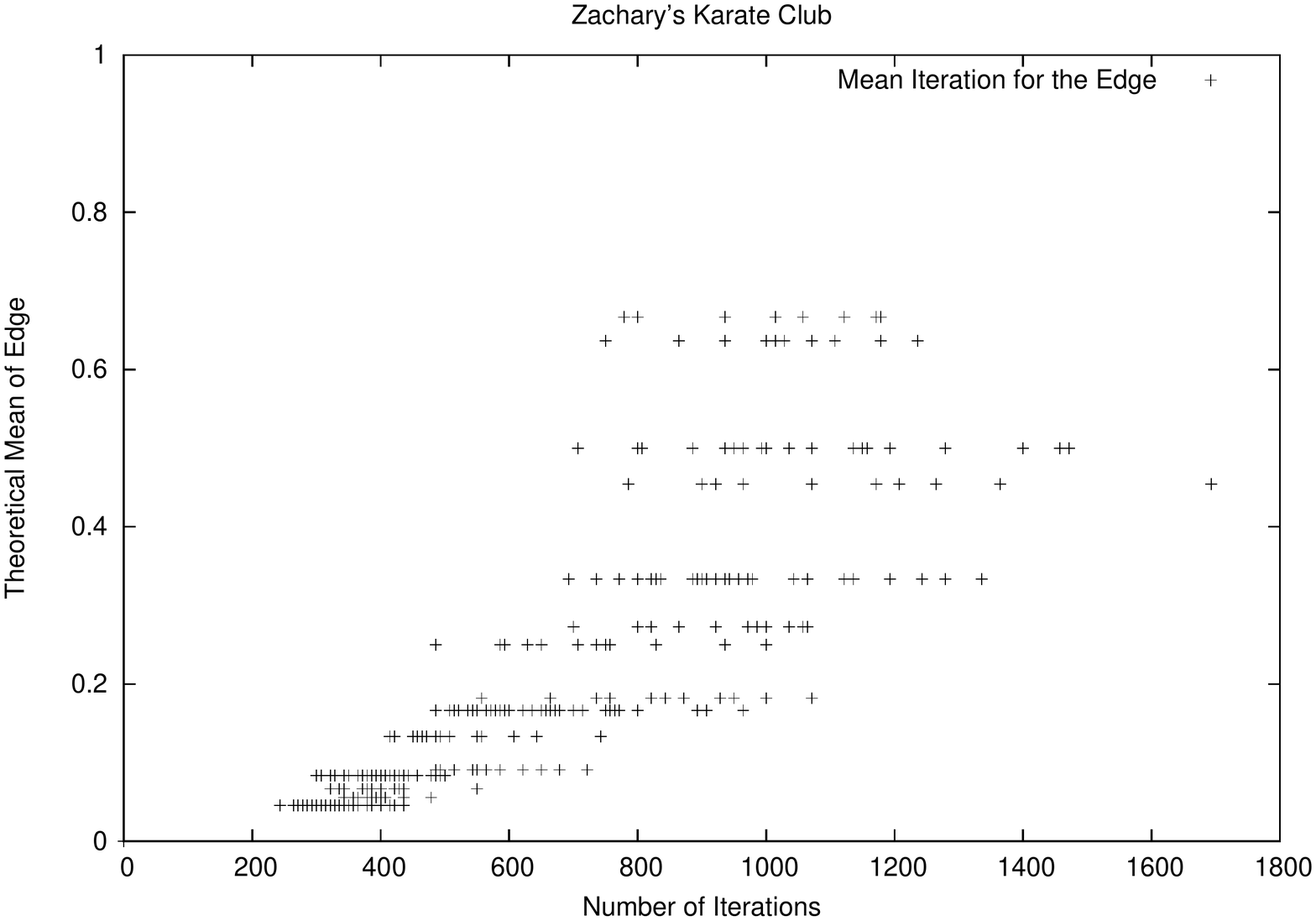}

\end{minipage}\hfill
\begin{minipage}[b]{0.5\linewidth}\centering
\centering\includegraphics[width=\columnwidth]{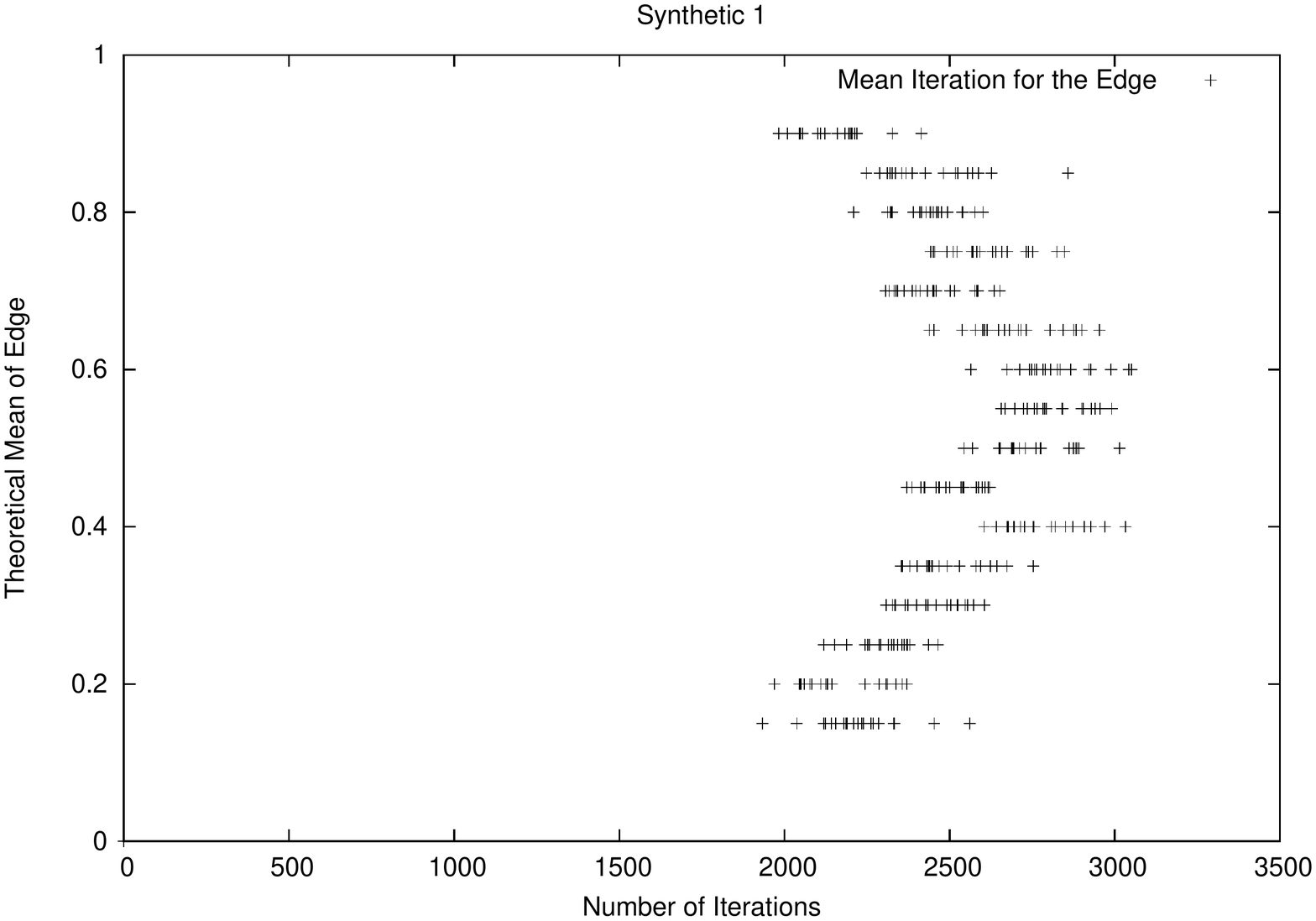}
\end{minipage}\hfill
\caption{The time for an edge's estimated autocorrelation function to pass under the threshold of 0.001 versus $\mu_e$ for that edge for Karate and the synthetic dataset. The synthetic dataset has a larger range of $\mu_e$ values than the real datasets and a significant number of edges for each value.}
\label{fig:karate_mean}
\end{figure}

This result suggests a way to estimate the autocorrelation time for larger graphs without repeating the entire experiment for every edge that could possibly appear. One can calculate $\mu_e$ for each edge from the JDM and sample edges with $\mu_e$ around 0.5. We use this method for selecting our subset of edges to analyze. In particular, we sampled about 300 edges from each of the larger graphs. For all of these except for power, the $\mu_e$ values were between 0.4 and 0.6. For power, the maximum $\mu_e$ value is about 0.15, so we selected edges with the largest $\mu$ values.

\subsection{Autocorrelation Values}\label{sec:threshold}
For each dataset and each run we calculated the unnormalized autocorrelation values. For the smaller graphs, this entailed setting $t$ to every value between 100 and 15,000 in multiples of 100. We randomly selected 1 run for each dataset and graphed the autocorrelation values for each of the edges. We present the data for the Karate and Dolphins datasets in Figures~\ref{fig:karateac} and~\ref{fig:dolphinsac}. For the larger graphs, we changed the starting and ending points, based on the graph size. For example, for Netscience was analyzed from 2,000 to 15,000 in multiples of 100, while as-22july was analyzed from 1,000 to 500,000 in multiples of 1,000.

\begin{figure}
\begin{minipage}[b]{0.45\linewidth}\centering
\includegraphics[width = \columnwidth,viewport=90 240 530 505, clip]{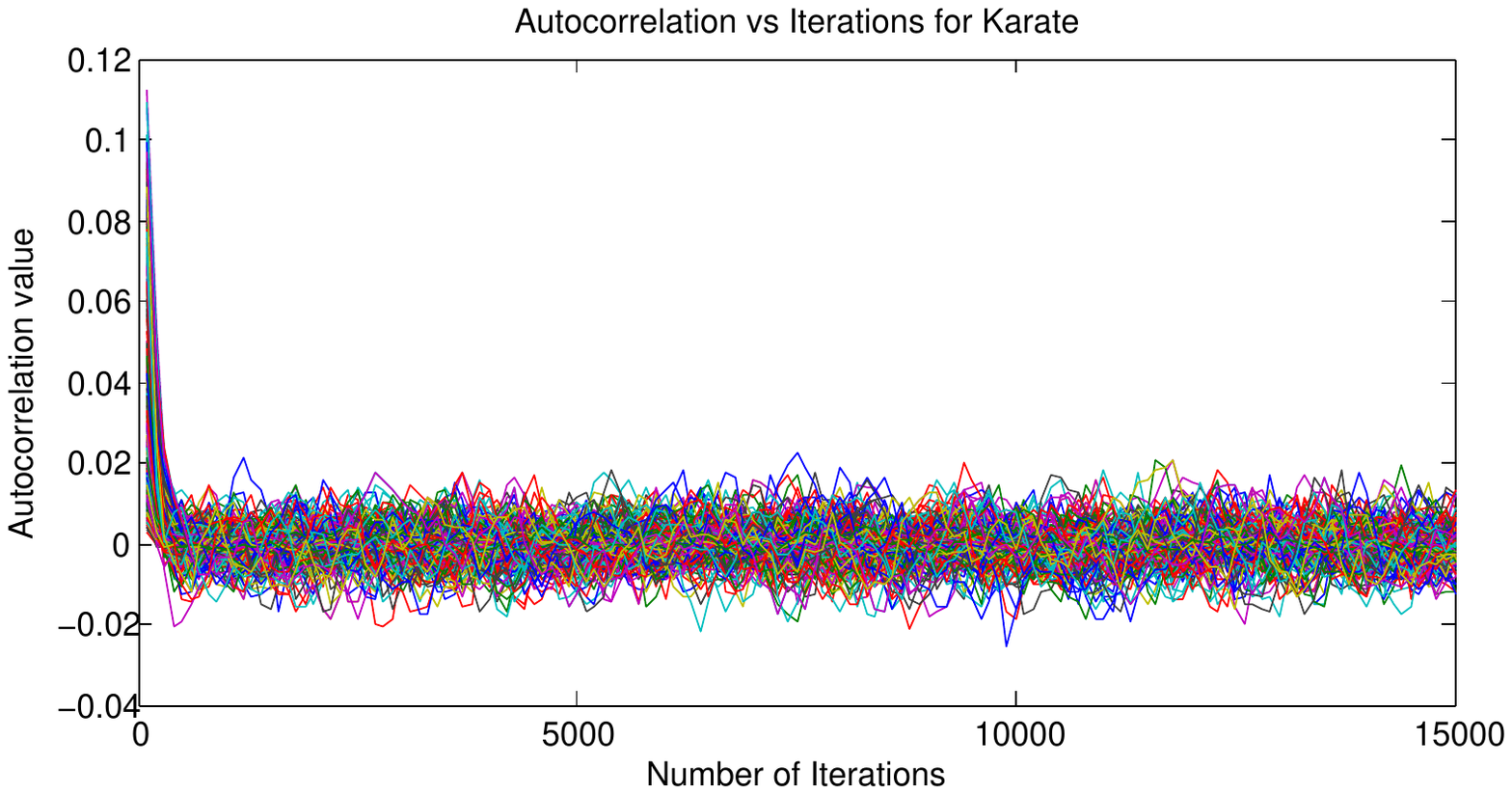}
\caption{The exponential drop-off for Karate appears to end after 400 iterations.}
\label{fig:karateac}
\end{minipage}\hfill%
\begin{minipage}[b]{0.45\linewidth}\centering
\includegraphics[width = \columnwidth,viewport=90 240 530 505, clip]{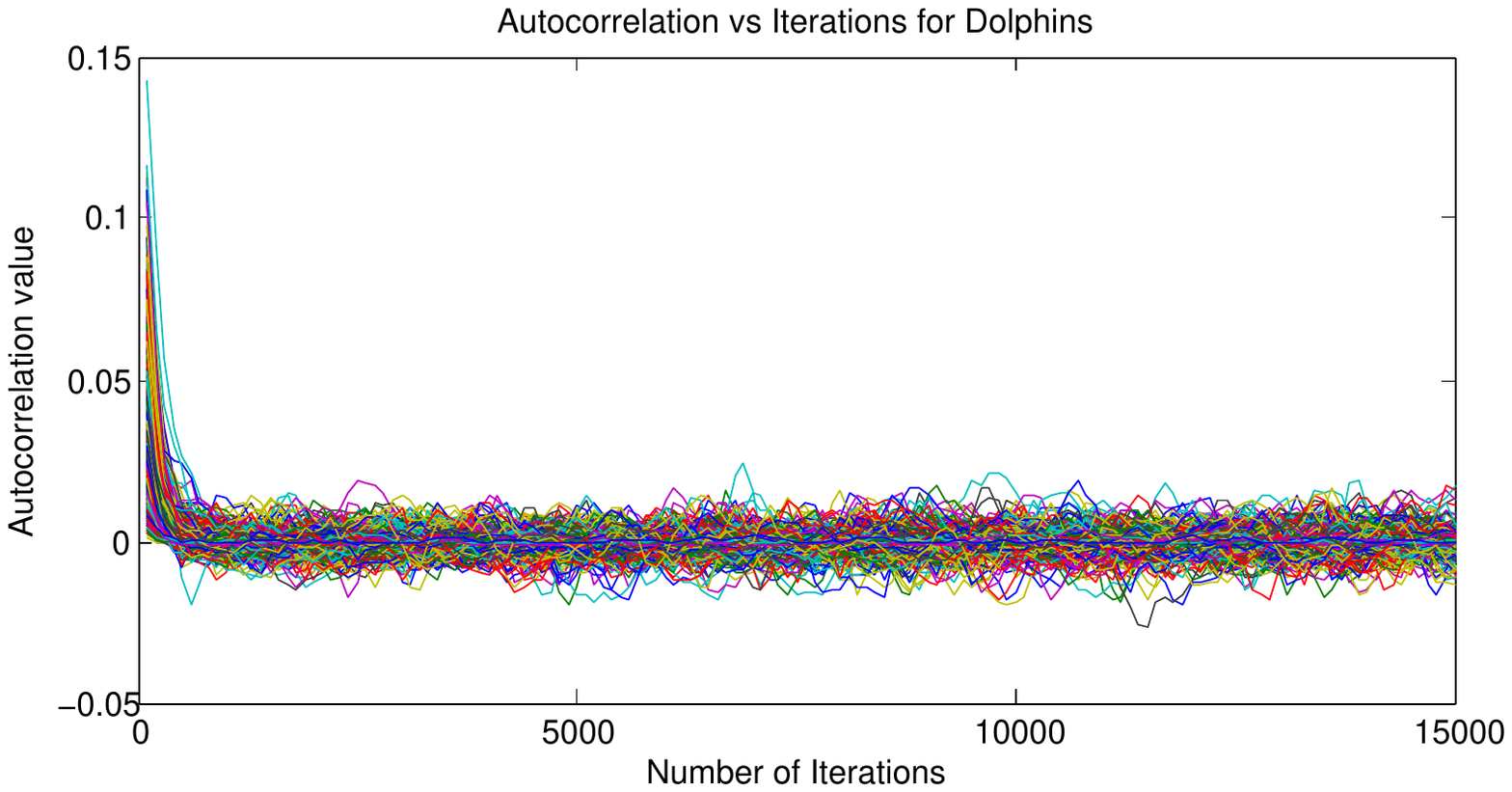}
\caption{The exponential drop-off for Dolphins appears to end after 600 iterations.}
\label{fig:dolphinsac}
\end{minipage}
\end{figure}


All of the graphs exhibit the same behavior. We see an exponential drop off initially, and then the autocorrelation values oscillate around 0. This behavior is due to the limited number of samples, and a bias due to using the sample mean for each edge. If we ignore the noisy tail, then we estimate that the autocorrelation `dies off' at the point where the mean absolute value of the autocorrelation approximately converges, then we can locate the `elbow' in the graphs. This estimate for all graphs is given in Table~\ref{table:summary} at the end of this Section.

\subsection{Estimated Integrated Autocorrelation Time}

For each dataset and run, we calculated the estimated integrated autocorrelation time. For the datasets with fewer than 1,000 edges, we calculated the autocorrelation in lags of 100 from 100 to 15,000 for each dataset. For the larger ones, we used intervals that depended on the total size of the graph. We estimate $\hat{\rho}(t)$ as the size of the intervals times the sum of the values. The cut-off function we used for the smaller graphs was $\lambda(t)=1$ if $0<t<15,000$ and 0 otherwise. This value was calculated for each edge. In Table~\ref{table:estintauto} we present the mean, maximum and minimum estimated integrated autocorrelation time for each dataset over the runs of the Markov Chain using three different methods. For each of the edges, we first calculated the mean, median and max estimated integrated autocorrelation value over the various runs. Then, for each of these three values for each edge, we calculated the max, mean and min over all edges.  For each of the graphs, the data series representing the median and max have each had their x-values perturbed slightly for clarity.

\begin{table}
\begin{tabular}{|c|c|ccc|ccc|ccc|}
\hline
Dataset & $|E|$ & mean & max & min & median & max & min & maximum & max & min\\
\hline
Karate & 78 & 288.92 & 444.1 & 221.13 & 288.31 & 443 & 217.63 & 382.59 & 608.06 & 268.95\\
Dolphins & 159 & 383.21 & 553.84 & 256.13 & 377.4 & 550.99 & 211.44 & 528.86 & 1134.1 & 397.35\\
LesMis & 254 & 559.77 & 931.35 & 129.45 & 542.43 & 895.57 & 57.492 & 894.08 & 2598.6 & 342.76\\
AdjNoun & 425 & 688.71 & 1154.9 & 156.49 & 659.06 & 1160.3 & 66.851 & 1186.1 & 4083.6 & 350.97\\
Football & 616 & 962.42 & 2016.9 & 404.77 & 925.97 & 1646.9 & 349.12 & 1546.4 & 7514.3 & 967\\
celegans &2359 & 3340.2 & 4851.4 & 2458.8 & 3235.7 & 4861.4 & 2323.6 & 4844.6 & 7836.9 & 3065.5\\
netscience & 2742 & 1791.4 & 3147.2 & 1087.7 & 1658.3 & 3033.2 & 937.8382 & 3401 & 7404 & 1894.7\\
power & 6594 & 6624.5 & 17933 & 2166.9 & 4768.8 & 16901 & 250.6012 & 20599 & 54814 & 7074.7\\
hep-th & 15751 & 26552 & 36816 & 14976 & 25608 & 37004 & 14130 & 46309 & 64936 & 25753\\
as-22july & 48436 & 89637 & 139280 & 60627 & 87190 & 152490 & 58493 & 121930 & 256520 & 76214\\
astro-ph & 121251 & 121860 & 298970 & 37706 & 119900 & 321730 & 46830 & 152930 & 408000 & 84498\\
\hline
\end{tabular}
\caption{A summary of all the estimated integrated autocorrelation times. Mean refers to taking the mean autocorrelation time for each edge, and then the mean, min and max of these values over all measured edges. Similarly, median is the median value for each edge, while max is the maximum for each edge.}
\label{table:estintauto}
\end{table}

These values are graphed on a log-log scale plot. Further, we also present a graph showing the ratio of these values to the number of edges. The ratio plot, Figure~\ref{fig:intratio}, suggests that the autocorrelation time may be a linear function of the number of edges in the graph, however the estimates are noisy due to the limited number of runs.

\begin{figure}
\includegraphics[width=80mm,  viewport = 50 50 300 290]{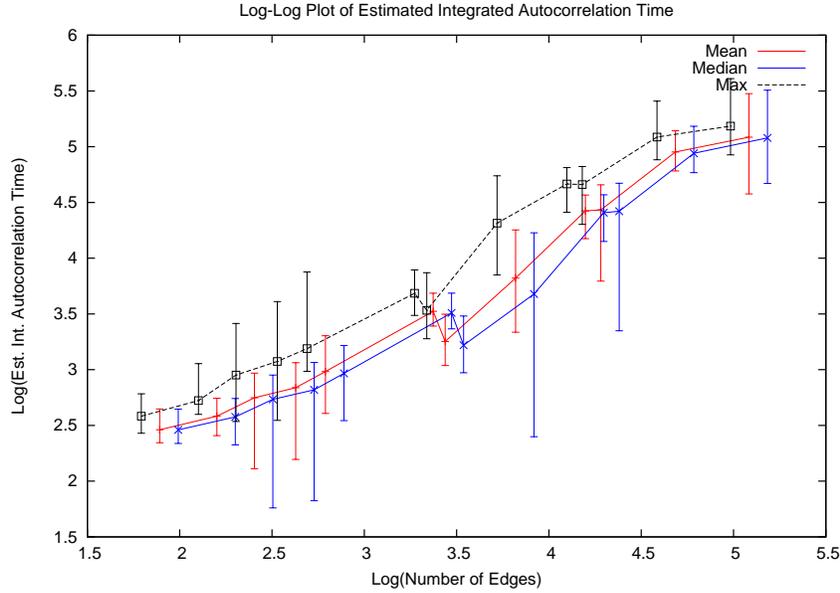}
\caption{The max, median and min values over the edges for the est. int. autocorrelation times in a log-log plot. L to R in order of size: Karate, Dolphins, LesMis, AdjNoun, Football, celegans, netscience, power, hep-th, as-22july and astro-ph}
\label{fig:intauto}
\end{figure}

\begin{figure}
\includegraphics[width = 80mm, viewport = 50 50 300 290]{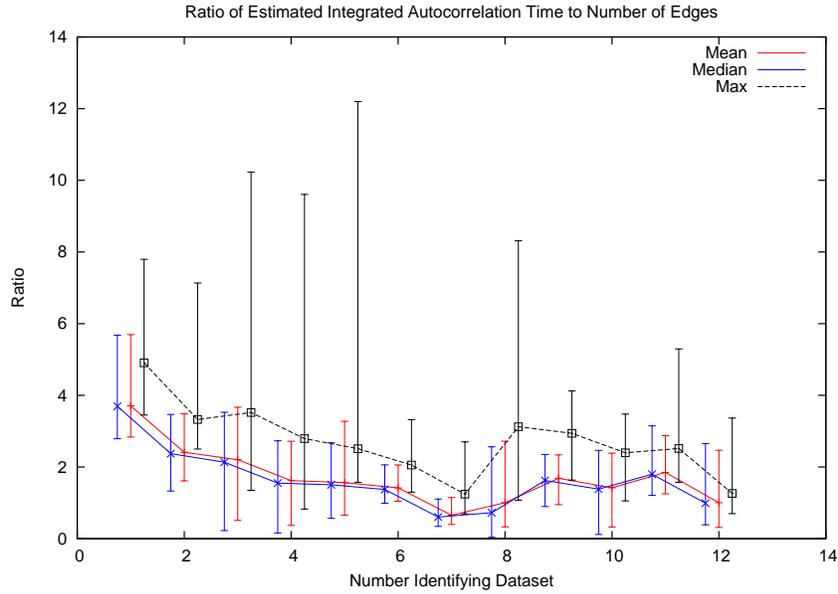}
\caption{The ratio of the max, median and min values over the edges to the number of edges for the estimated integrated autocorrelation times. L to R in order of size: Karate, Dolphins, LesMis, AdjNoun, Football, celegans, netscience, power, hep-th, as-22july and astro-ph}
\label{fig:intratio}
\end{figure}

All three metrics give roughly the same picture. We note that there is much higher variance in estimated autocorrelation time for the larger graphs. If we consider the evidence of the log-log plot and the ratio plot, we suspect that the autocorrelation time of this Markov Chain is linear in the number of edges.


\subsection{The Sample Mean Approaches the Real Mean for Each Edge}

Given the results of the previous experiment estimating the integrated autocorrelation time, we next executed an experiment suggested by Raftery and Lewis~\cite{Raftery95thenumber}. First we note that for each edge $e$, we know the true value of $P(e\in G| G\mbox{ has } \jdm)$ is exactly $\frac{\jdm_{k,l}}{\dv_k\dv_l}$ or $\frac{\jdm_{k,k}}{{\dv_k \choose 2}}$ if $e$ is an edge between degrees $k$ and $l$. This is because there are $\dv_k\dv_l$ potential $(k,l)$ edges that show up in any graph with a fixed $\jdm$, and each graph has $\jdm_{k,l}$ of them. If we consider the graphs as being labeled, then we can see that each edge has an equal probability of showing up when we consider permutations of the orderings.

Thus, our experiment was to take samples at varying intervals, and consider how the sample mean of each edge compared with our known theoretical mean. For the smaller graphs, we took 10,000 samples at varying gaps depending on our estimated integrated autocorrelation time and repeated this 10 times. 
Additionally, we saw that the total variational distance quickly converged to a small, but non-zero value. We repeated this experiment with 20,000 samples and, for the two smallest graphs, Karate and Dolphins, we repeated the experiment with 5,000 and 40,000 samples. These results show that this error is due to the number of samples and not the sampler. For the graphs with more than 1,000 edges, each run resulted in 20,000 samples at varying gaps, and this was repeated 5 times. We present these results in Figures 18 through 28. If $S_{e,g}$ is the sample mean for edge $e$ and gap $g$, and $\mu_e$ is the true mean, then the graphed value is $\sum_{e} |S_{e,g}-\mu_e|/ \sum_e \mu_e$.

\begin{figure}
\begin{minipage}[b]{0.48\linewidth}\centering
\includegraphics[width=\columnwidth]{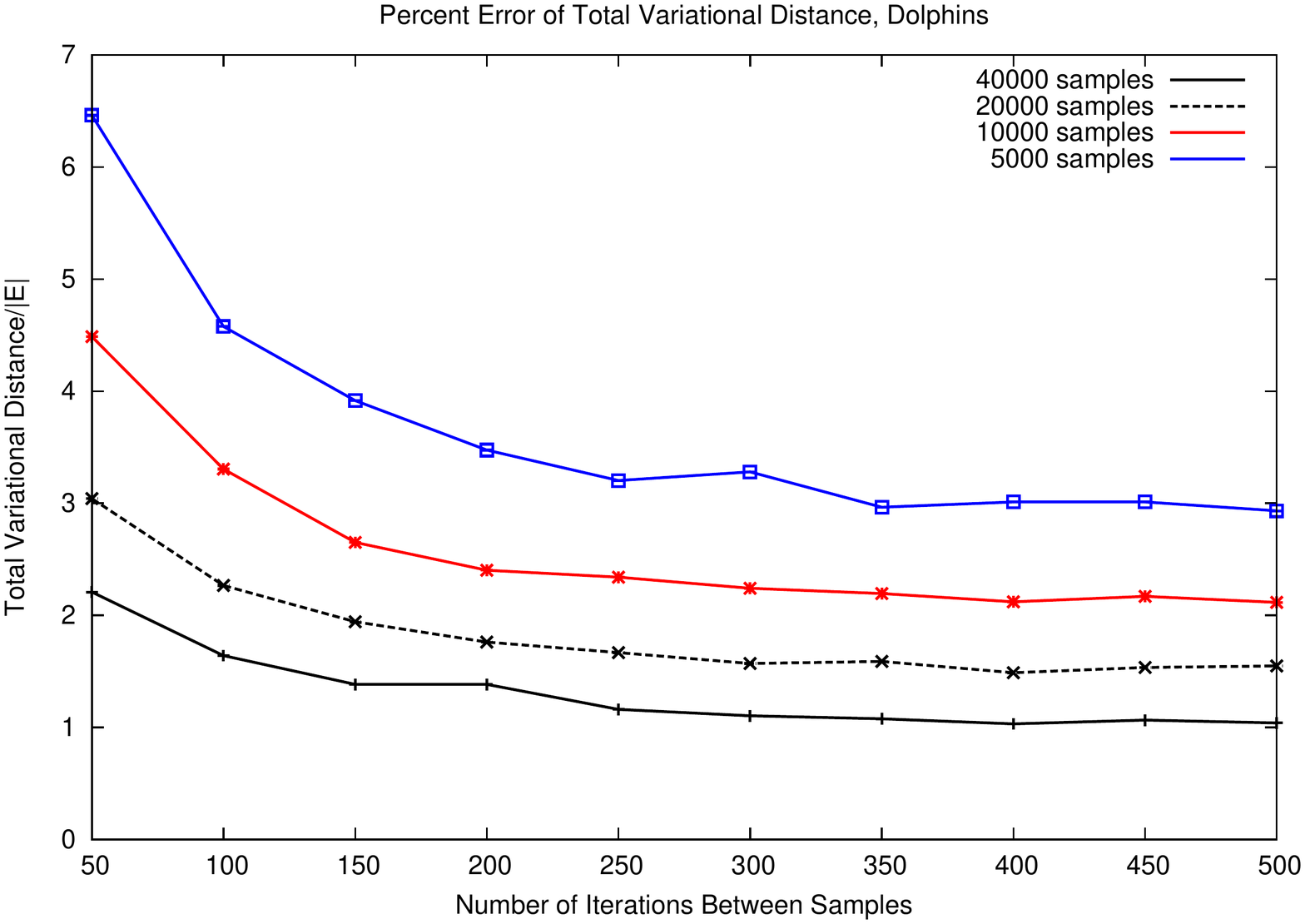} 
\caption{The Dolphin Dataset with 5,000 to 40,000 samples}\label{fig:dolphinsallmeans}
\end{minipage}\hfill%
\begin{minipage}[b]{0.48\linewidth}\centering
\includegraphics[width=\columnwidth]{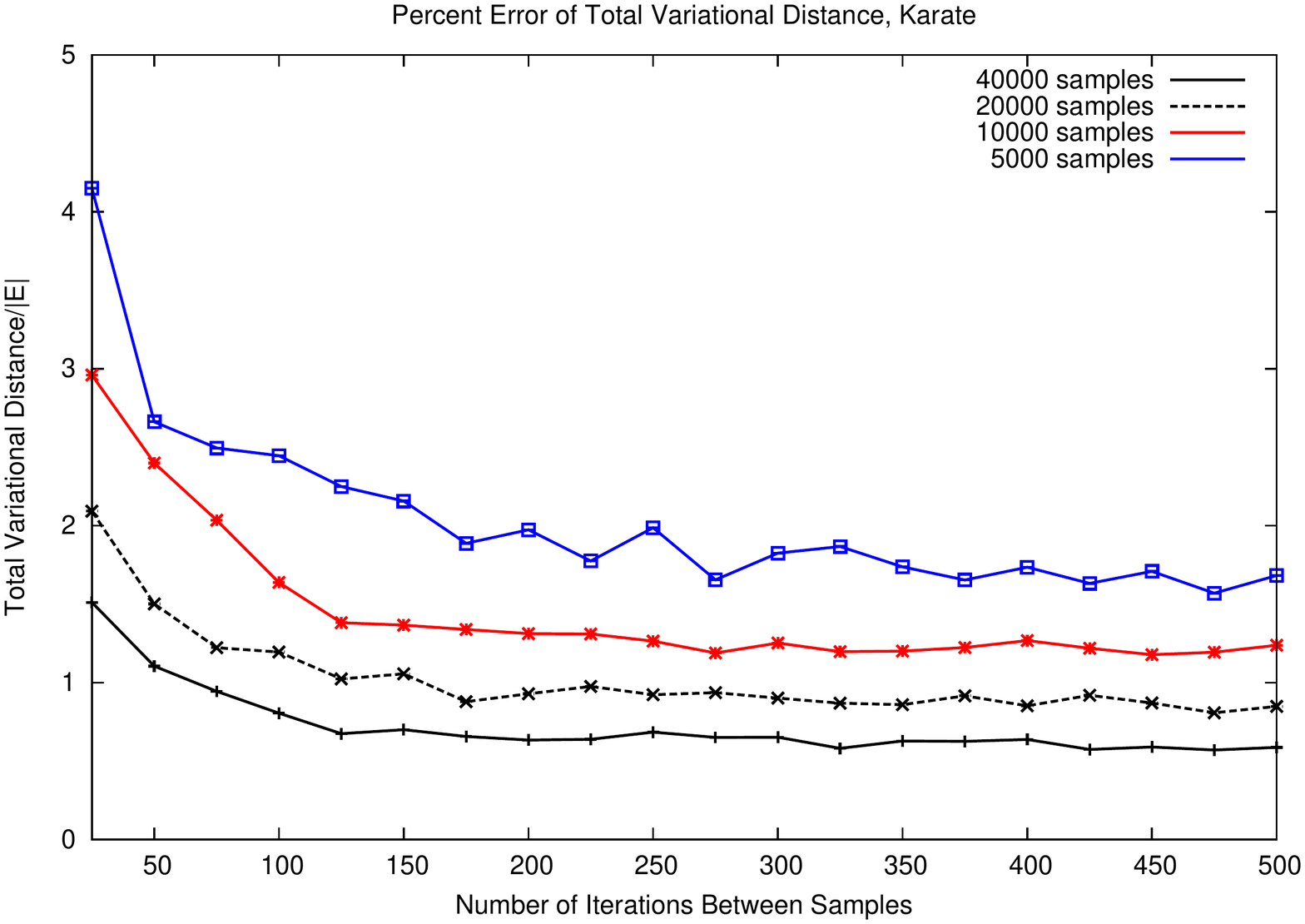}
\caption{The Karate Dataset with 5,000 to 40,000 samples}\label{fig:karateallmeans}
\end{minipage}
\end{figure}

\begin{figure}
\begin{minipage}[b]{0.48\linewidth}\centering
\includegraphics[width=\columnwidth]{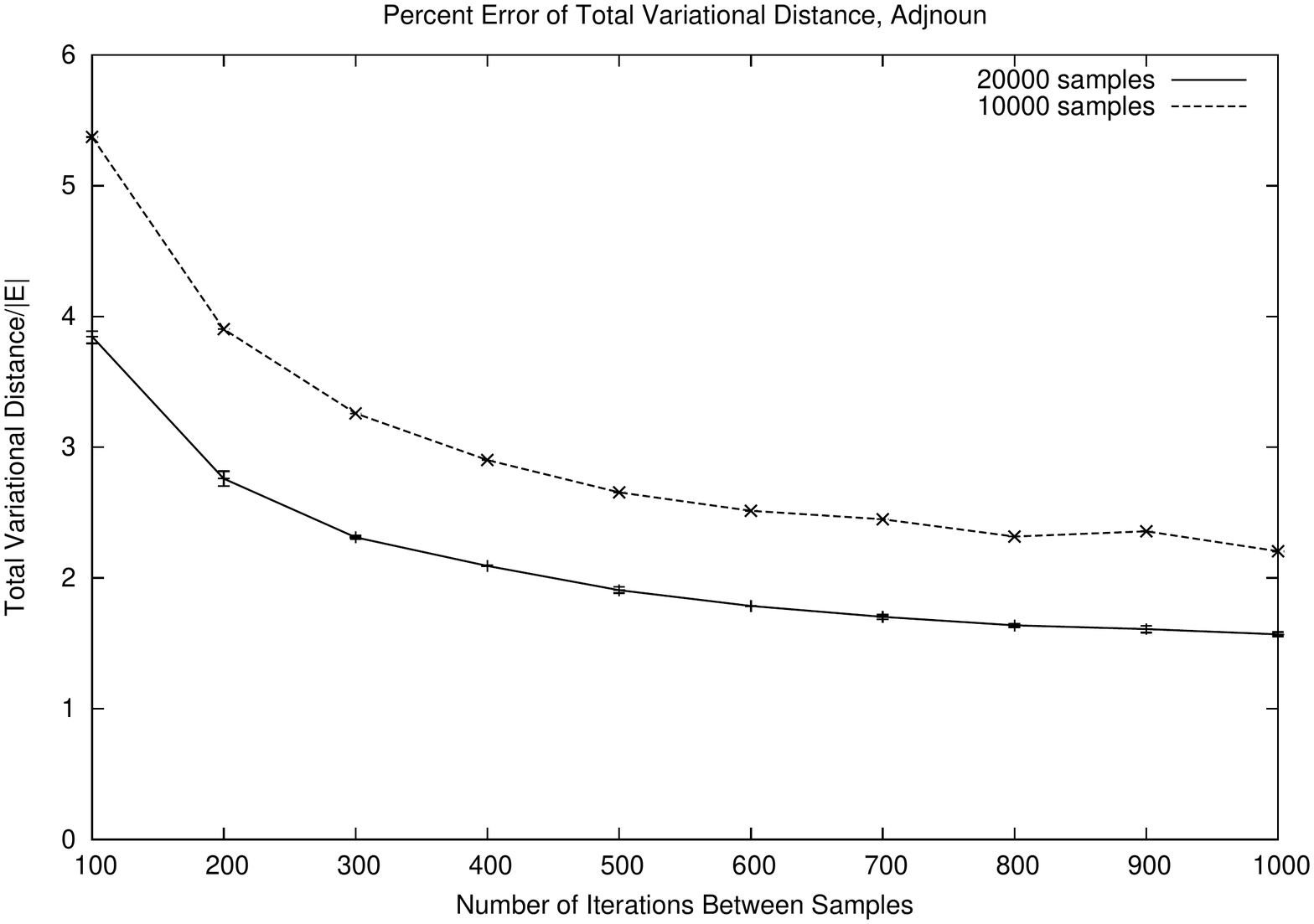} 
\caption{The AdjNoun Dataset with 10,000 and 20,000 samples}\label{fig:adjnounallmeans}
\end{minipage}\hfill%
\begin{minipage}[b]{0.48\linewidth}\centering
\includegraphics[width=\columnwidth]{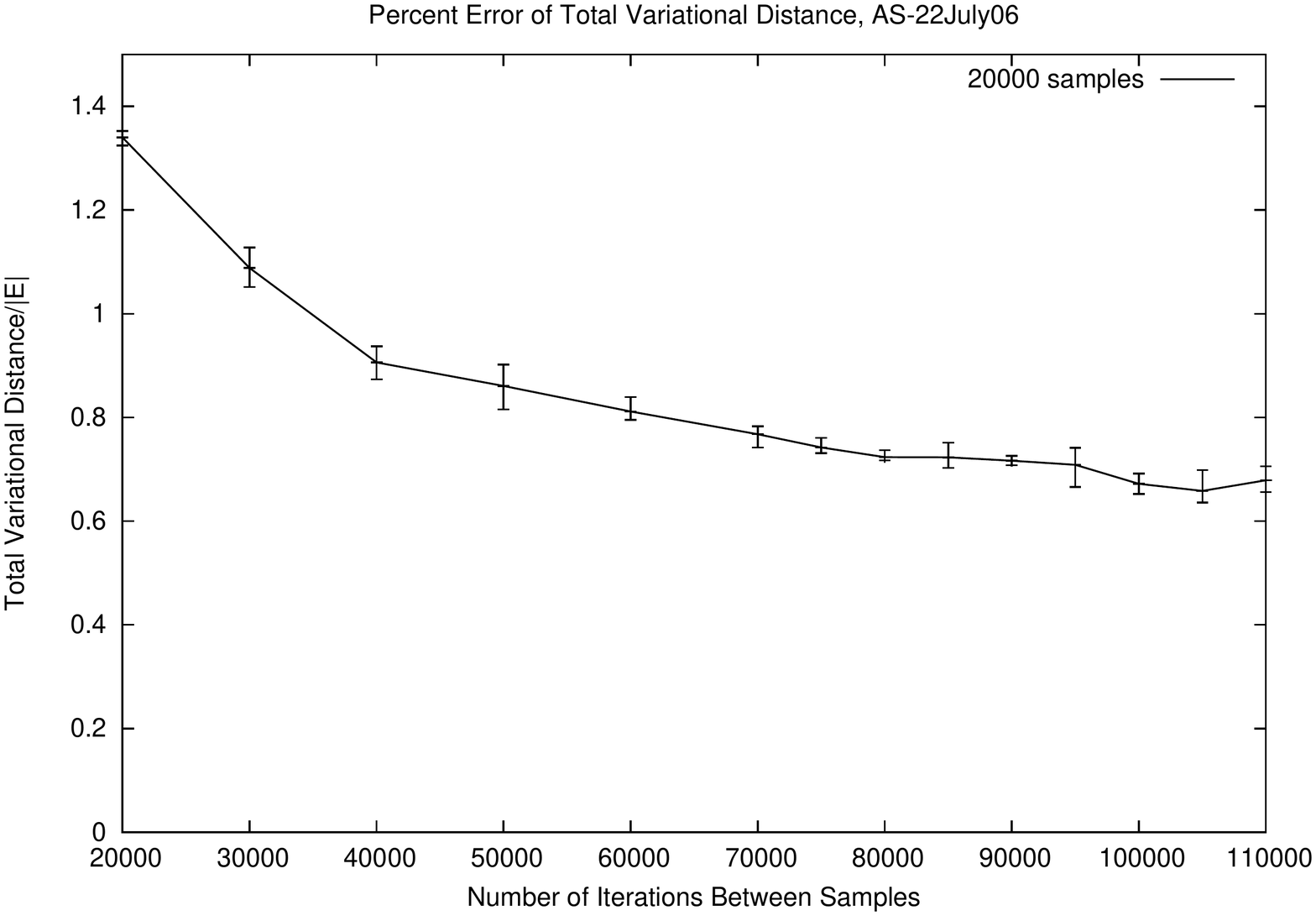}
\caption{The AS-22July06 Dataset with 20,000 samples}\label{fig:as-22julyallmeans}
\end{minipage}
\end{figure}

\begin{figure}
\begin{minipage}[b]{0.48\linewidth}\centering
\includegraphics[width=\columnwidth]{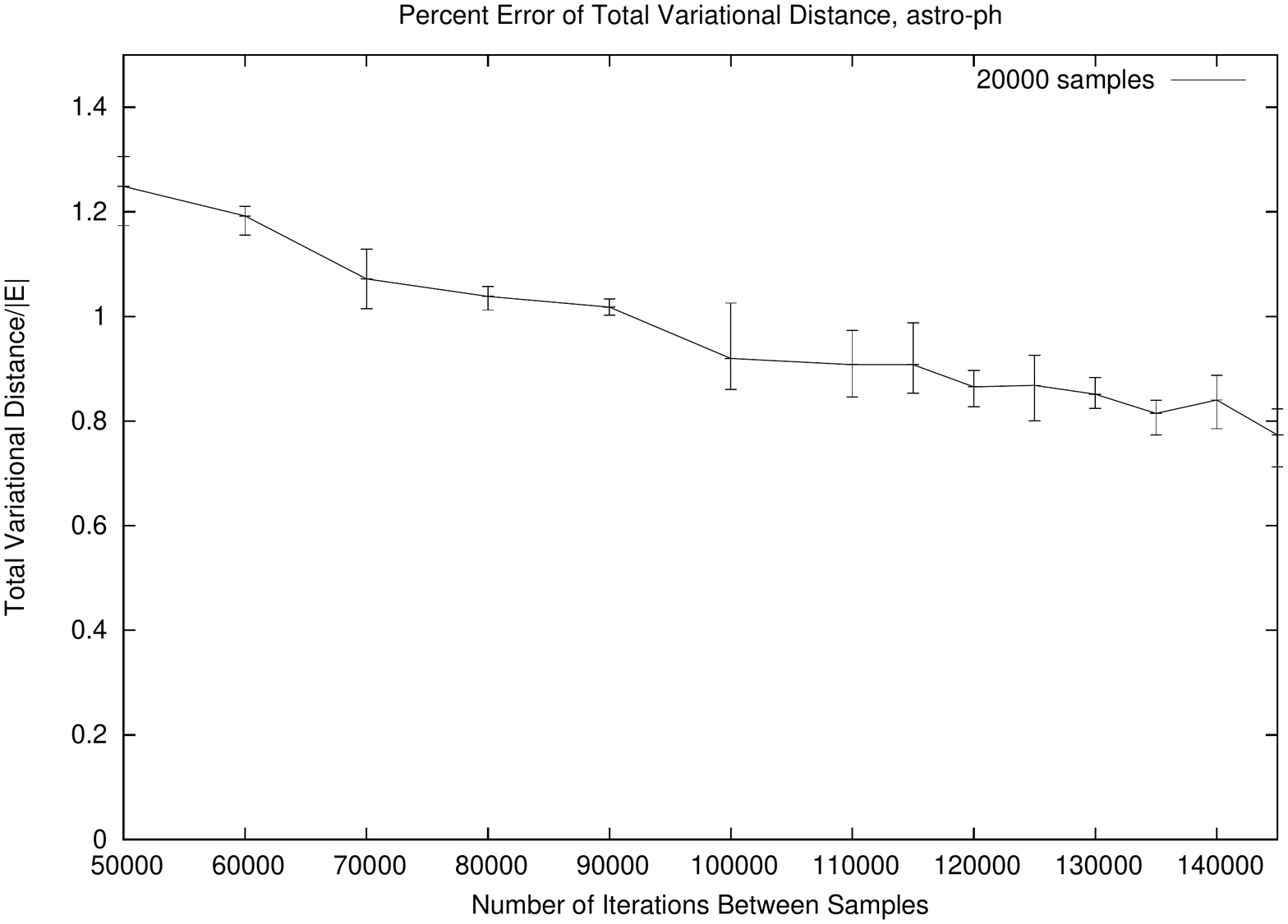} 
\caption{The Astro-PH Dataset with 20,000 samples}\label{fig:astroallmeans}
\end{minipage}\hfill%
\begin{minipage}[b]{0.48\linewidth}\centering
\includegraphics[width=\columnwidth]{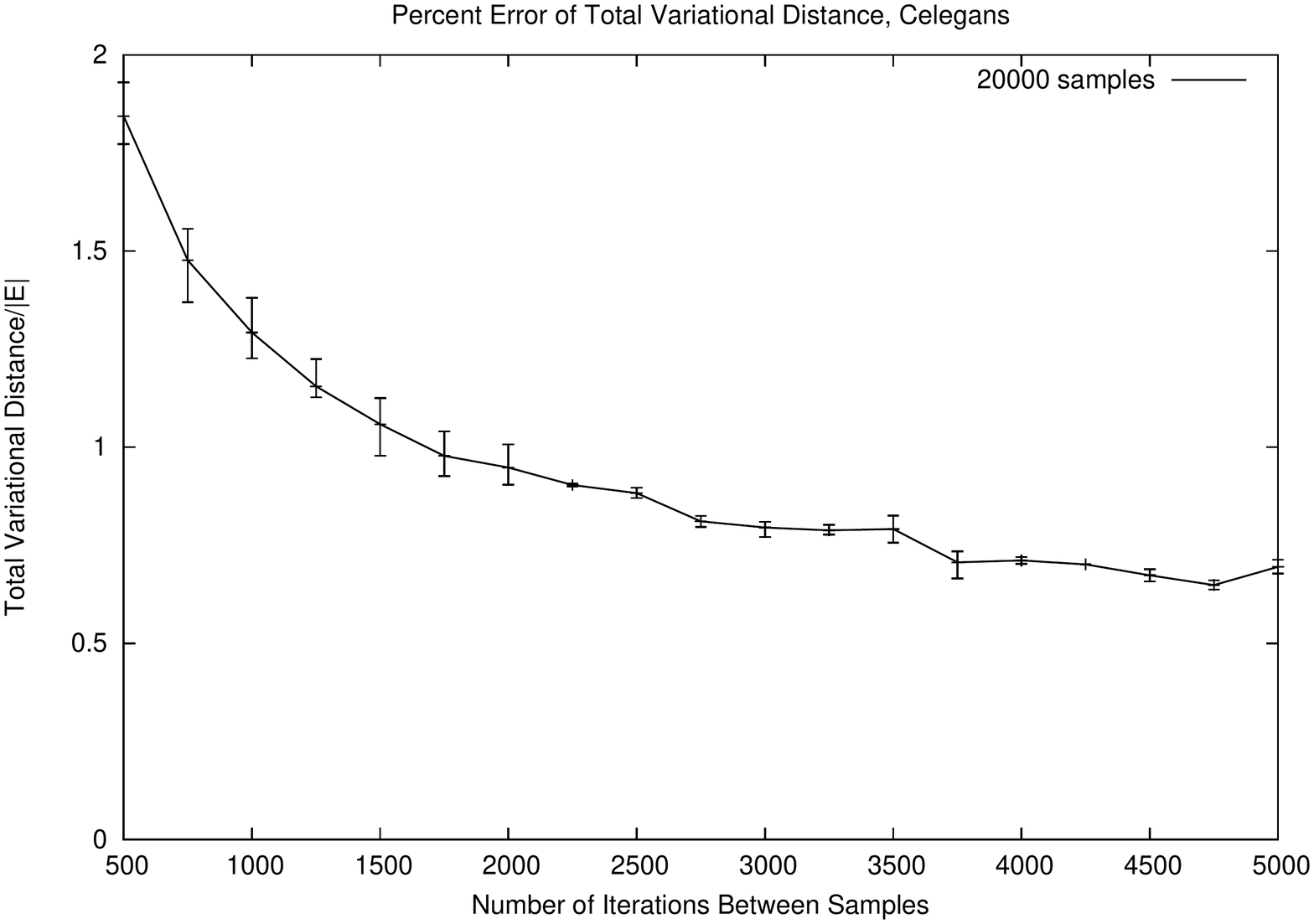}
\caption{The Celegans Dataset with 20,000 samples}\label{fig:celegansallmeans}
\end{minipage}
\end{figure}

\begin{figure}
\begin{minipage}[b]{0.48\linewidth}\centering
\includegraphics[width=\columnwidth]{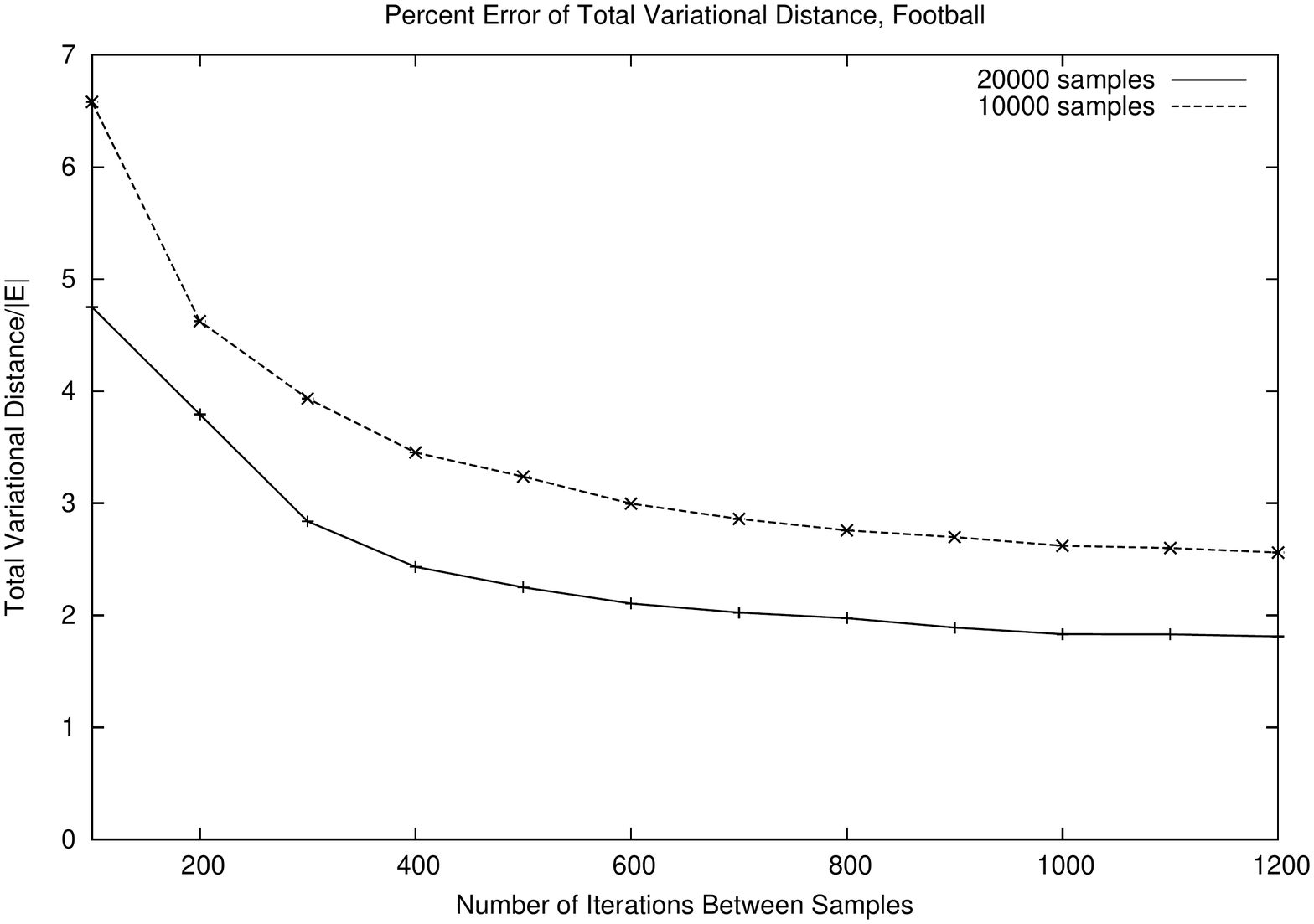} 
\caption{The Football Dataset with 10,000 and 20,000 samples}\label{fig:footballallmeans}
\end{minipage}\hfill%
\begin{minipage}[b]{0.48\linewidth}\centering
\includegraphics[width=\columnwidth]{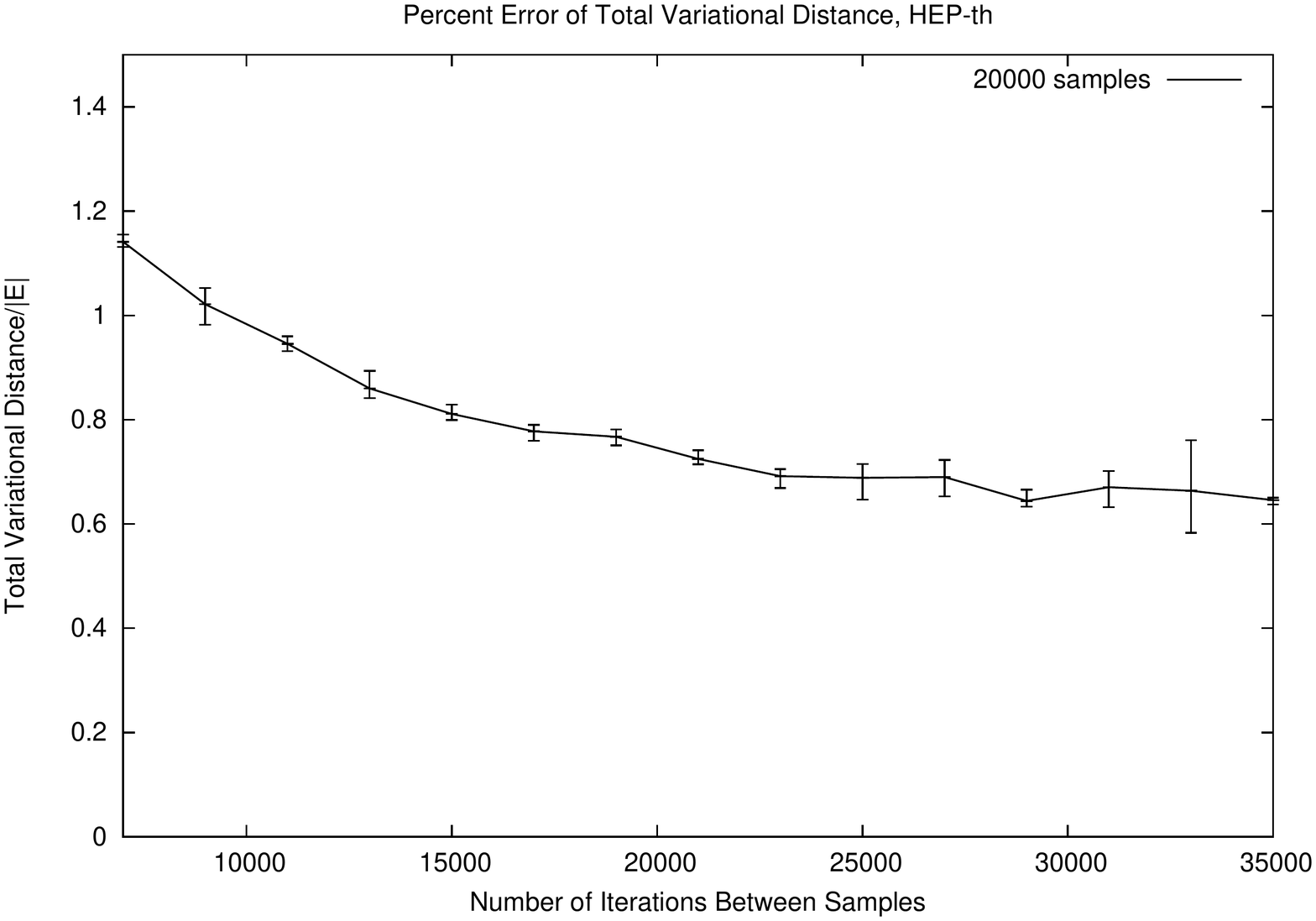}
\caption{The Hep-TH Dataset with 20,000 samples}\label{fig:hep-thallmeans}
\end{minipage}
\end{figure}

\begin{figure}
\begin{minipage}[b]{0.48\linewidth}\centering
\includegraphics[width=\columnwidth]{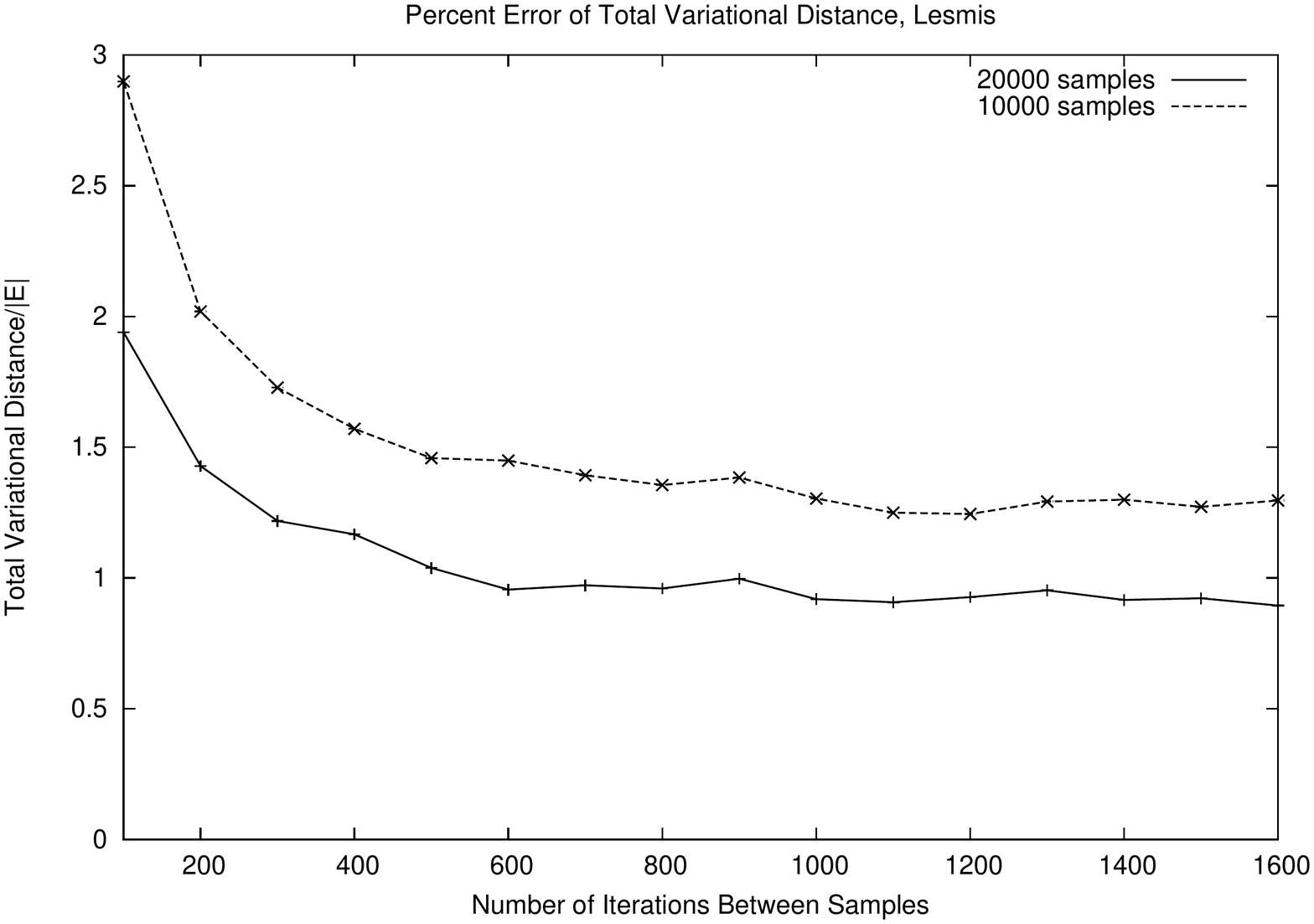} 
\caption{The LesMis Dataset with 10,000 and 20,000 samples}\label{fig:lesmisallmeans}
\end{minipage}\hfill%
\begin{minipage}[b]{0.48\linewidth}\centering
\includegraphics[width=\columnwidth]{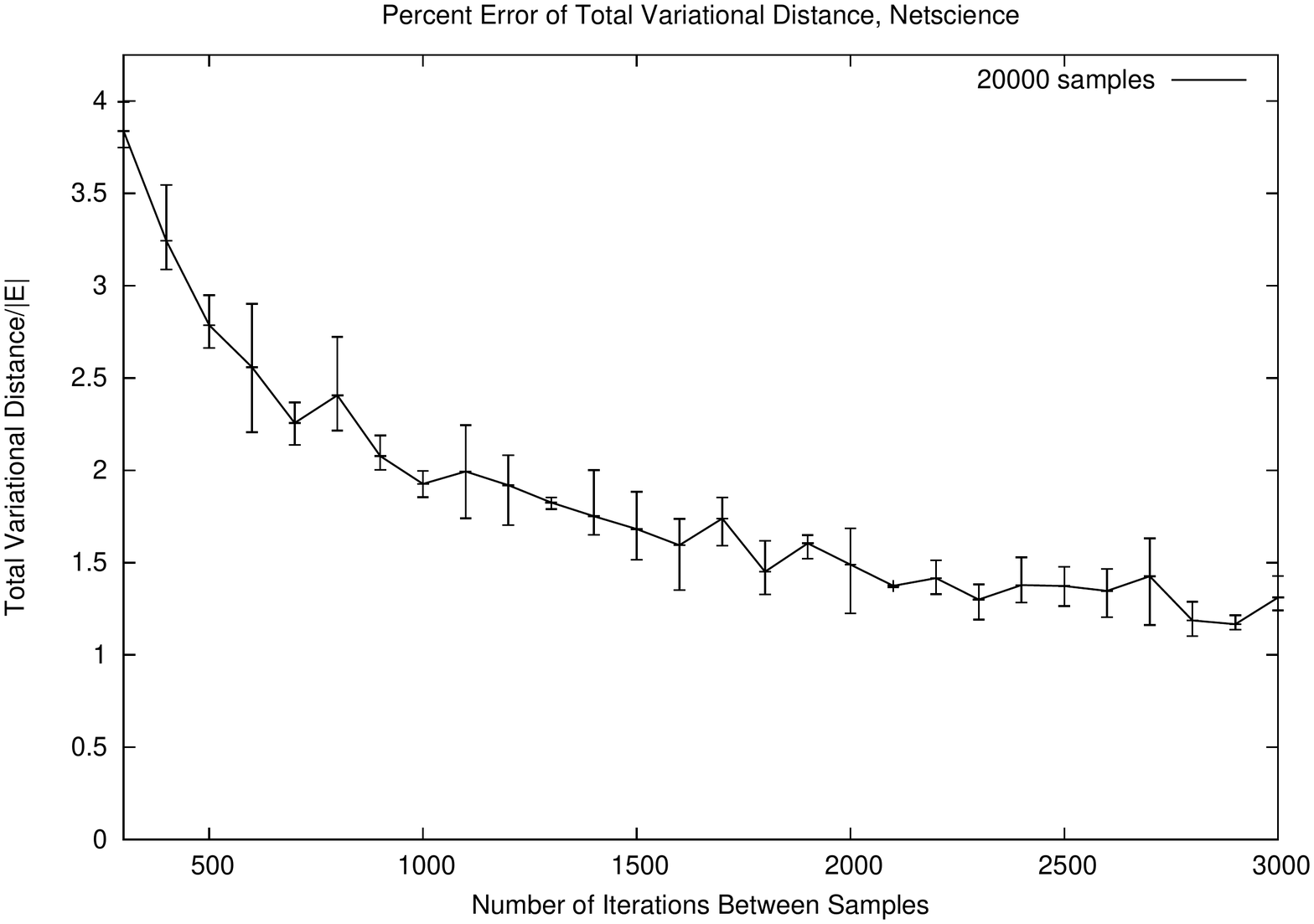}
\caption{The Netscience Dataset with 20,000 samples}\label{fig:netscienceallmeans}
\end{minipage}
\end{figure}

\begin{figure}
\begin{minipage}[b]{0.48\linewidth}\centering
\includegraphics[width=\columnwidth]{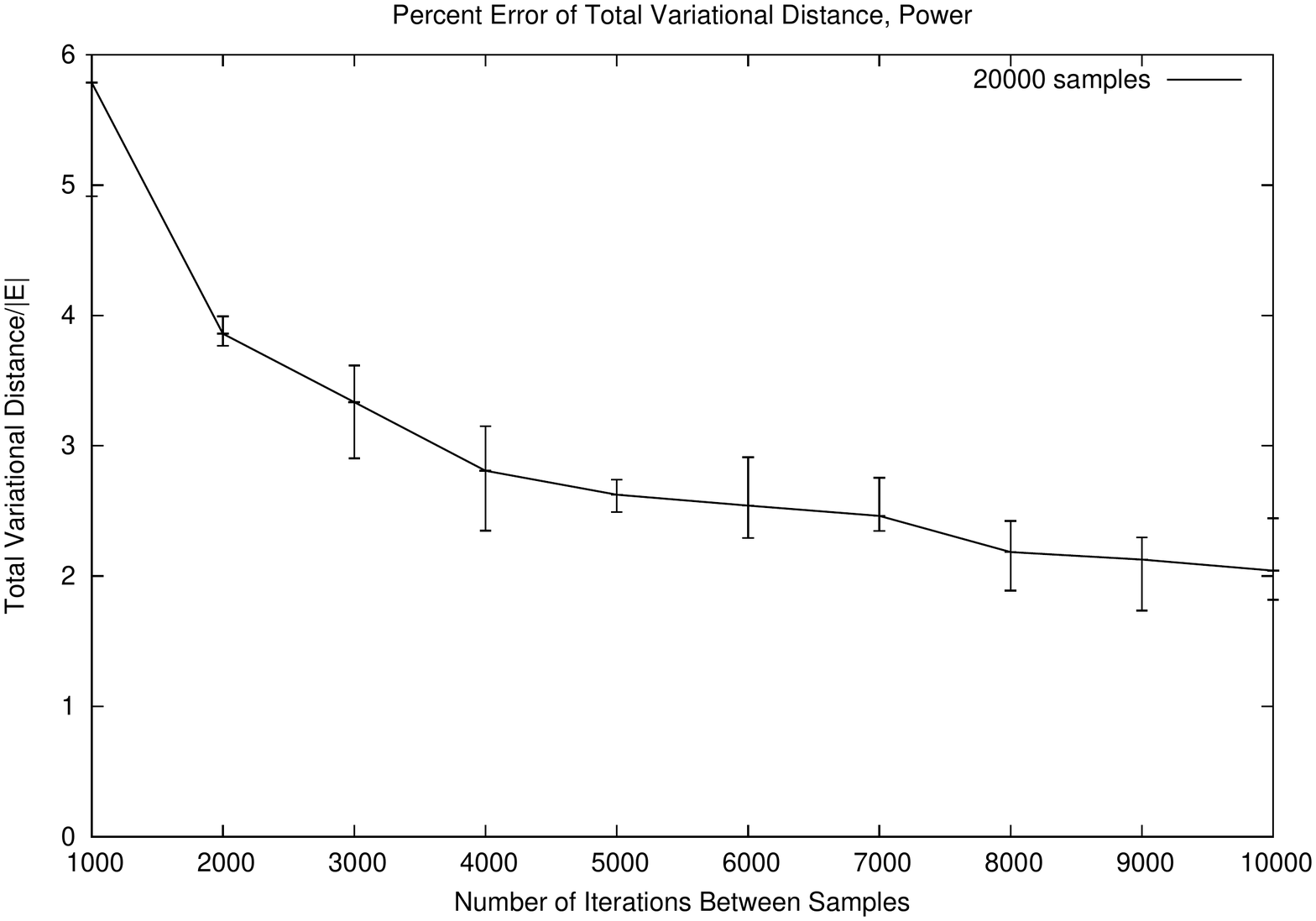} 
\caption{The Power Dataset with 20,000 samples}\label{fig:powerallmeans}
\end{minipage}\hfill%
\end{figure}

In all of the figures, the line runs through the median error for the runs and the error bars are the maximum and minimum values. We note that the maximum and minimum are very close to the median as they are within 0.05\% for most intervals. These graphs imply that we are sampling uniformly after a gap of 175 for the Karate graph. For the dolphin graph, we see very similar results, and note that the error becomes constant after a sampling gap of 400 iterations. 

For the larger graphs, we varied the gaps based on the graph size, and then focused on the area where the error appeared to be decreasing. Again, we see consistent results, although the residual error is higher. This is to be expected because there are more potential edges in these graphs, so we took relatively fewer samples per edge. A summary of the results can be found in Table~\ref{table:summary}.

\subsection{Summary of Experiments}
\begin{table}[ht]
\begin{center}
\begin{tabular}{|c|c|c|c|c|}
\hline
&$|E|$& Max EI& Mean Conv. & Thresh.\\
\hline
AdjNoun & 425 & 1186& 900 & 700\\
\hline
AS-22July& 48,436 &256,520 &95,000&156,744\\
\hline
Astro-PH &121,251 &408,000 &120,000&343,154\\
\hline
Celegans & 2,359 & 7836.9&3,750&7,691\\
\hline
Dolphins & 159 & 528& 400& 600\\
\hline
Football & 616 & 1546& 1000 & 900\\
\hline
Hep-TH & 15,751 & 64,936& 28,000 &22,397 \\
\hline
Karate & 78 & 382 & 175& 400\\
\hline
LesMis & 254 & 894& 800& 1000\\
\hline
Netscience & 2,742 & 7,404&2,000&7,017\\
\hline
Power & 6,594 & 54,814&8,000&7,270\\
\hline
\end{tabular}
\caption{A summary of estimates on convergence from the three experiments. The values are the Maximum Estimated Integrated Autocorrelation time (Max EI, the third column of Table 2), the Sample Mean Convergence iteration number, and the time to drop under the Autocorrelation Threshold. The Autocorrelation threshold was calculated as when the average absolute value of the autocorrelation was less than 0.0001}
\label{table:summary}
\end{center}
\end{table}

Based on the results in this table, our recommendation would be that running the Markov Chain for $5m$ steps would satisfy all running time estimates except for Power's results for the Maximum Estimated Integrated Autocorrelation time. This estimate is significantly lower than the result for Chain $\cal A$ that was obtained using the standard theoretical technique of canonical paths.

\section{Conclusions and Future Work}
This paper makes two primary contributions. The first is the investigation of Markov Chain methods for uniformly sampling graphs with a fixed joint degree distribution. Previous work shows that the mixing time of $\cal A$ is polynomial, while our experiments suggest that the mixing time of $\cal B$ is also polynomial. 
The relationship between the mean of an edge and the autocorrelation values can be used to efficiently experiment with larger graphs by sampling edges with mean between 0.4 and 0.6 and repeating the analysis for just those edges. This was used to repeat the experiments for larger graphs and to provide further convincing evidence of polynomial mixing time.

Our second contribution is in the design of the experiments to evaluate the mixing time of the Markov Chain. In practice, it seems the stopping time for sampling is often chosen without justification. Autocorrelation is a simple metric to use, and can be strong evidence that a chain is close to the stationary distribution when used correctly.

\paragraph*{Acknowledgments} The authors would like to acknowledge helpful contributions of David Gleich, Satish Rao, Jaideep Ray, Alistair Sinclair, Virginia Vassilevska Williams and Wes Weimer.

\begin{small}
\bibliographystyle{plain}
\bibliography{jdd}
\end{small}

\section{Appendix}
\paragraph{Designing Synthetic Data}

Our goal was to represent all of the potential means for $\frac i {20}$ for $0<i\leq20$. We note that $20$ factors into 4 and 5, so we want to first fix some degrees such that $\dv_k=4$ and $\dv_l=5$. For convenience, because the maximum number of edges we will be assigning is 20, we will pick these degrees to be $K=\{20,21,22,23,24\}$ for $\dv_k=4$ and $L=\{25,26,27,28\}$ for $\dv_l=5$. The number of each we picked was to guarantee that there were at least 20 combinations of edge types. We can now assign the values $1-20$ arbitrarily to $\jdm_{K\times L}$. This assignment clearly satisfies that $\jdm_{k,l} \leq \dv_k\dv_l$ so far.

Now, we must fill in the rest of $\jdm$ so that $\dv$ is integer valued for degrees. One way is to note that we should have $4\times 20$ degree 20 edges. We can sum the number of currently allocated edges with one endpoint of degree 20, call this $x$ and set $\jdm_{1,20}=80-x$. There are many other ways of consistently completing $\jdm$, such as assigning as many edges as possible to the $K\times K$ and $L\times L$ entries, like $\jdm_{20,21}$. This results in a denser graph. For the synthetic graph used in this paper, we completed $\jdm$ by adding all edges as $(1,20),(1,21)$ etc edges. We chose this because it was simple to verify and it also made it easy to ignore the edges that were not of interest.
\end{document}